\def\dotheappendixmagic{\input\jobname.acc}
\theoremstyle{plain} 
\newtheorem{case}{Case}
\gdef\activatealgo{\let\>\qquad\let\=\leftarrow
        \catcode`[=\active\def[##1]{\ifmmode \lbrack##1]\else{\bf ##1}\fi}}
\def\TODO{{\bf TODO: }}
\newcommand{\YES}{\textup{\textsf{YES}}}
\newcommand{\NO}{\textup{\textsf{NO}}}
\def\graphir{{\rm ir}}
\def\graphIR{{\rm IR}}
\def\K{{\mathcal K}}
\def\G{{\mathcal G}}
\def\GE{{\mathcal G_e}}
\def\U{{\mathcal U}} 
\def\W{{\mathcal W}}
\def\KE{{\mathcal K_e}}
\def\KI{{\mathcal K_i}}
\def\KW{{\rm Not\G}}
\def\GW{{\rm Not\K}}
\def\Oh{\mathcal{O}}
\def\poly{{\rm poly}}
\def\FPT{{\rm FPT}}
\def\COir{\textsc{Co-MinMaxIR}}
\def\COIR{\textsc{Co-MaxIR}}
\def\deg{{\rm deg}}
\def\alphadrei#1#2#3{{\alpha^{\varphi(k, \KE#1,\GE#2,\W#3)}}}
\def\alphavier#1#2#3#4{{\alpha^{\varphi(k, \KE#1,\GE#2,\W#3)#4}}}
\def\addv{\cup\{v\}}
\def\thealpha{3.841}
\def\IRRpar{3.069}
\def\IRRexa{1.96}
\def\irrpar{3.3232}
\def\irrexa{1.9809}
\begin{document}


\title[A Parameterized Route to Exact Puzzles]{Breaking the $2^n$-Barrier for \textsc{irredundance}:\\ A Parameterized Route to Solving Exact Puzzles \small (Extended~Abstract)}

\author[newcastle]{Brankovic}{Ljiljana Brankovic}
\address[newcastle]{Univ. Newcastle, University Drive, NSW 2308 Callaghan,  Australia.}
\email{ljiljana.brankovic@newcastle.edu.au}

\author[trier]{Fernau}{Henning Fernau}
\address[trier]{Univ. Trier, FB 4, Abteilung Informatik, 54286 Trier, Germany.}
\email{{fernau|raible}@uni-trier.de}

\author[aachen]{Kneis}{Joachim Kneis}
\address[aachen]{RWTH Aachen University, Dept. of  Computer Science, 52074 Aachen, Germany.}
\email{{kneis|langer|rossmani}@cs.rwth-aachen.de}

\author[metz]{Kratsch}{Dieter Kratsch}
\address[metz]{Univ. Paul Verlaine -- Metz, LITA,
Ile du Saulcy,
57045 Metz Cedex 1, France.}
\email{kratsch@univ-metz.fr}

\author[aachen]{Langer}{Alexander Langer}

\author[orleans]{Liedloff}{Mathieu Liedloff}
\address[orleans]{Univ. Orl\'eans, LIFO, rue L\'eonard de Vinci,
B.P. 6759, 45067 Orl\'eans Cedex 2, France.}
\email{mathieu.liedloff@univ-orleans.fr}

\author[trier]{Raible}{Daniel Raible}
\author[aachen]{Rossmanith}{Peter Rossmanith}

\thanks{The first author gratefully acknowledges the support given by The University of Newcastle for RGC CEF grant number G0189479
which supported her work on this project.}

\keywords{Parameterized Algorithms, Exact Exponential-Time Algorithms, Graphs, Irredundant Set.}


\begin{abstract}
The lower and the upper irredundance numbers of a graph~$G$,
denoted $\graphir(G)$ and $\graphIR(G)$ respectively, are
conceptually linked to domination and independence numbers and
have numerous relations to other graph parameters. It is a
long-standing open question whether determining these numbers for
a graph $G$ on $n$ vertices admits exact algorithms running in time
less than the trivial $\Omega(2^n)$ enumeration barrier.  We solve
these open problems by devising parameterized algorithms for the
dual of the natural parameterizations of the problems with running
times faster than $\Oh^*(4^{k})$.  For example, we present an
algorithm running in time $\Oh^*(\IRRpar^{k})$ for determining
whether $\graphIR(G)$ is at least $n-k$.  Although the
corresponding problem has been known to be in \FPT\ by
kernelization techniques, this paper
offers the first parameterized algorithms with an exponential
dependency on the parameter in the running time.  Additionally,
our work also appears to be the first example of a parameterized
approach leading to a solution to a problem in exponential time algorithmics
where the natural interpretation as an exact exponential-time
algorithm fails.
\end{abstract}

\maketitle

\section{Introduction}

A set $I \subseteq V$ is called an \emph{irredundant set} of a
graph $G = (V, E)$ if each $v \in I$ is either isolated in $G[I]$,
 the
subgraph induced by $I$, or there is at least one vertex $u
\in V \setminus I$ with $N(u) \cap I = \{v\}$, called a
\emph{private neighbor} of $v$. An irredundant set $I$ is
\emph{maximal} if no proper superset of $I$ is an irredundant set.
The lower irredundance number $\graphir(G)$ equals the minimum
cardinality taken over all maximal irredundant sets of $G$;
similarly, the upper irredundance number $\graphIR(G)$ equals the
maximum cardinality taken over all such sets.

In graph theory, the irredundance numbers have been extensively
studied due to their relation to numerous other graph parameters.
An estimated 100 research papers~\cite{FHHHK02} have been
published on the properties of irredundant sets in graphs, e.g.,
\cite{AL78,BC84,CM97,Fav93,Fav88,FFHJ94,HLP85,BC79,LP83,CGHM97}.
For example, if $D \subseteq V$ is an (inclusion-wise) minimal
dominating set, then for every $v \in D$ there is some minimality
witness, i.e., a vertex that is only dominated by $v$. In fact, a
set is minimal dominating if and only if it is irredundant and
dominating~\cite{CHM78}. Since each independent set is also an
irredundant set, the well-known \emph{domination chain} $\graphir(G) \le \gamma(G)
\le \alpha(G) \le \graphIR(G)$ is a simple observation.
Here, as usual, $\gamma(G)$ denotes the
size of a minimum dominating set, and $\alpha(G)$ denotes the size of a
maximum independent set in $G$. It is known that 
$\gamma(G)/2<\graphir(G) \leq \gamma(G) \leq 2\cdot
\graphir(G)-1$, see~\cite{HHS98}.

Determining the irredundance numbers is NP-hard even for bipartite
graphs~\cite{HLP85}.  They can be computed in linear time
on graphs of bounded treewidth~\cite{BLW87}, but the fastest
currently known exact algorithm for general graphs is the
simple $\Oh^*(2^n)$ brute-force approach enumerating all subsets.%
\footnote{The $\Oh^*$-notation hides polynomial factors, e.g.,
$f(n,k) \cdot \poly(n,k)) = \Oh^*(f(n,k))$.}

Since there has been no progress in the exact exponential time area,
it is tempting to study these problems from a parameterized
complexity viewpoint (for an introduction, see,
e.g.,~\cite{DF99}).  The hope is that the additional notion of a
\emph{parameter}, e.g., the size $k$ of the irredundant set,
allows for a more fine-grained analysis of the running time, maybe
even a running time polynomial in $n$ and exponential only in $k$:
It has been known for a while (see, e.g.,~\cite{RamSau2006}) that
it is possible to break the so-called \emph{$2^n$-barrier} for
(some) vertex-selection problems by designing parameterized
algorithms that run in time $\Oh^*(c^k)$ for some $c<4$ by a
``win-win'' approach: either the parameter is ``small'' ($k <
n/2+\epsilon$ for an appropriate $\epsilon>0$) and we use the
parameterized algorithm, or we enumerate all
$\binom{n}{n/2+\epsilon} < 2^n$ subsets.

Unfortunately, the problem of finding an irredundant set of size
$k$ is $W[1]$-complete when parameterized in $k$ as shown by
Downey et al.~\cite{DFR00}, which implies that algorithms with a
running time of $\Oh(f(k)\poly(n))$ are unlikely. However, they
also proved that the parameterized dual, where the parameter is
$k' := n - k$, admits a problem kernel of size $3{k'}^2$ and is
therefore in FPT (but the running time has a superexponential
dependency on the parameter). What's more, in order to break the
$2^n$-barrier for the unparameterized problems, we can also use
the dual parameter. Therefore in this paper we
study the parameterized problems (following the notation
of~\cite{DFR00}) \textsc{Co-Maximum Irredundant Set} (\COIR) and
\textsc{Co-Minimum Maximal Irredundant Set} (\COir), which given a
graph $G=(V,E)$ and positive integer parameter $k$ are to decide
whether, respectively, $\graphIR(G) \ge n-k$ and $\graphir(G) \le
n-k$. 
We also consider the variant 
\textsc{exact Co-Minimum Maximal Irredundant Set} (\textsc{exact} \COir), which given a
graph $G=(V,E)$ and positive integer parameter $k$, asks to decide
whether $\graphir(G) =
n-k$.

\paragraph{\bf Our contribution.}
Our first contribution are linear problem kernels with $2k-1$
vertices for the $\COir$~problem and $3k$ vertices for \COIR,
which already shows that both problems can be solved with a
running time of $\Oh^*(c^k)$, $c \leq 8$. In particular, this
improves the kernel with $3k^2$ vertices and the corresponding running
time of $O^*(8^{k^2})$ of~\cite{DFR00}.

Secondly, we
present a simple algorithm with a running time of
$\Oh^*(\thealpha^k)$ which solves  both \COIR\ and \textsc{exact} \COir\
simultaneously. The price we pay for this generality is that the
running time is only slightly better than $O^*(4^k)$, since we
cannot exploit any special properties of \COIR\ that do not hold
for {\textsc{exact}} \COir\, and vice versa.

Thirdly, we present one 
modification 
 of the above algorithm,
which trades the generality for improved running time and
solves \COIR\ in time $\Oh^*(\IRRpar^k)$.
 Although all the algorithms are surprisingly
simple, a major effort is required to prove their running time
using a non-standard measure and a Measure \& Conquer (M\&C)
approach.  While nowadays M\&C is a standard technique for the
analysis of moderately exponential time algorithms (see, e.g.,
\cite{FGK09DSIS}), it is still seldomly used in parameterized
algorithmics.  

Finally, as a direct consequence of the
above algorithms, we obtain the first exact exponential time
algorithm breaking the $2^n$-barrier for computing the
irredundance numbers on arbitrary graphs with $n$ vertices, a
well-known 
open question (see, e.g.,~\cite{Dagstuhl2008}).

Due to the lack of space, most proofs or parts
thereof have been moved to an appendix.

\section{Preliminaries and Linear Kernels}

The following alternative definition of \emph{irredundance} is
more descriptive and eases understanding the results in this
paper: The vertices in an irredundant set can be thought of as kings,
where each such king ought to have his very own private garden
that no other king can see (where ``seeing'' means adjacency).
Each king has exactly one cultivated garden, and all the other
private neighbors degenerate to wilderness. It is also possible
that the garden is already built into the king's own castle. One
can easily verify that this alternate definition is equivalent to
the formal one given above.

\begin{definition}
Let $G=(V,E)$ be a graph and $I\subseteq V$ an irredundant set. We
call the vertices in $I$ \emph{kings}, the set consisting of exactly
one private neighbour for each king we call \emph{gardens}, and
all remaining vertices \emph{wilderness}.  If a king has more than
one private neighbor, we fix one of these vertices as a unique garden
and the other vertices as wilderness. If a vertex $v \in I$ has no neighbors
in $I$, we (w.l.o.g.)\ say $v$ has an \emph{internal} garden,
otherwise the garden is \emph{external}. We denote the
corresponding sets as $\K,\G,\W$. Note that $\K$ and $\G$ are not
necessarily disjoint, since there might be kings with internal
gardens. Kings with external gardens are denoted by $\KE$ and
kings with internal garden by $\KI$.  Similarly, the set of
external gardens is $\GE := \G \setminus \K$. In what follows
these sets are also referred to as  ``labels''.

\end{definition}

%
%
The following theorem makes use of the inequality $\graphir(G) \le
\gamma(G) \le n/2$ in graphs without isolated vertices, and
improves on the known kernel for $\COir$, while the subsequent
theorem uses crown reductions and  improves over the previously
known kernel with a quadratic number of vertices for $\COIR$
~\cite{DFR00}.

\begin{theorem}
\label{thm:kernel1}
The $\COir$ problem admits a kernel with at most $2k-1$ vertices.
\end{theorem}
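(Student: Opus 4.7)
The plan is to combine a single easy reduction rule with the classical bound $\gamma(G) \le n/2$ for graphs with no isolated vertices, which sits at the top of the domination chain.

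\textbf{Step 1: Reduction rule for isolated vertices.} I would first argue that if $v \in V$ is isolated in $G$, then $v$ lies in every maximal irredundant set: using the king/garden reformulation, $v$ can always be added as a king with an internal garden without disturbing the private-neighbour witnesses of other kings. Hence $\graphir(G) = \graphir(G - v) + 1$. Since removing $v$ also decreases $n$ by one, the inequality $\graphir(G) \le n - k$ is equivalent to $\graphir(G - v) \le (n-1) - k$, so the instance $(G-v, k)$ is equivalent to $(G, k)$. I would apply this rule exhaustively.

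\textbf{Step 2: Apply the domination bound.} After exhaustive reduction, the remaining graph $G$ has no isolated vertices, so by the classical inequality $\gamma(G) \le n/2$ and the domination chain, $\graphir(G) \le \gamma(G) \le n/2$.

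\textbf{Step 3: Case split on $n$ versus $2k$.} If $n \ge 2k$, then $\graphir(G) \le n/2 \le n - k$, so $(G,k)$ is a \YES-instance and I would output a trivial constant-size \YES-instance. Otherwise $n \le 2k - 1$ and the graph itself is the desired kernel.

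The whole argument is short; the only point that needs genuine verification is the claim in Step~1 that every isolated vertex sits in every maximal irredundant set and that $\graphir$ scales cleanly under removal of such vertices, and this is immediate from the alternative definition. I do not expect any real obstacle — the main work is simply making sure the correspondence between $(G,k)$ and $(G-v,k)$ is stated carefully in both directions.
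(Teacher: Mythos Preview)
Your proposal is correct and follows essentially the same argument as the paper: remove isolated vertices (which lie in every maximal irredundant set, leaving $k$ unchanged), then use $\graphir(G)\le\gamma(G)\le n/2$ on the resulting isolate-free graph to either answer \YES\ when $n\ge 2k$ or return the graph itself as a kernel with at most $2k-1$ vertices. The only cosmetic difference is that the paper phrases the case split as $k\le n/2$ versus $k>n/2$, which is equivalent to yours.
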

\begin{accumulate}
\subsection*{Proof of Theorem~\ref{thm:kernel1}}
\begin{proof}
It is known that $\textrm{irr}(G)$ is upper bounded by the domination number $\gamma(G)$ of any graph $G$  of minimum degree one, see above.
Since $\gamma(G)\leq n/2$ for any graph of minimum degree one (see, e.g.,  \cite{Ore62}), we can derive that $\textrm{irr}(G)\leq n/2$.
So,
in the given $\COir$ instance $(G,k)$ we can first delete all isolated vertices (they will be in any maximal irredundant set), without
changing the parameter, and then kernelize as follows: if $k\leq n/2$, then we are looking for a maximal irredundant set of size at most
$n-k\geq n/2\geq \textrm{irr}(G)$, so that we can immediately return \YES. If $k> n/2$, then we have obtained the desired kernel, which is just
the current graph, with the claimed bound.
\end{proof}

\end{accumulate}

\begin{remark}
By results of Blank and McCuaig/Shepherd, see \cite{McCShe89}, we know
that $\gamma(G)\leq \frac{2}{5}n$ for any graph $G$ of minimum degree
two, apart from some small exceptional graphs. If we could design
reduction rules to cope with degree-one vertices in a given $\COir$
instance $(G,k)$, we might be able to show that
 $\COir$ admits a kernel with at most $\frac{5}{3}k$ vertices, starting
 with  a graph of order $n$.
This is currently an open question.
\end{remark}

By using a crown reduction, see~\cite{ChoFelJue2004,Fel03}, we can show:

\begin{theorem}\label{thm:COIR-kernel}
\label{thm:kernel2}
$\COIR$ admits a kernel with at most $3k$ vertices.
\end{theorem}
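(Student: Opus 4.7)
The plan is to apply the crown-decomposition technique of Chor--Fellows--Juedes. The first preliminary observation is that in any \YES-instance of \COIR, a maximum irredundant set $I$ decomposes as $I=\KI\cup\KE$ where $\KI$ is independent in $G$ and the private-neighbor map injects $\KE$ into $V\setminus I$, so $|\KE|\leq|V\setminus I|\leq k$ and hence $\alpha(G)\geq|\KI|\geq n-2k$. Moreover, any independent set is itself irredundant, so if $\alpha(G)\geq n-k$ we may immediately answer \YES. The interesting regime is therefore $n-2k\leq\alpha(G)<n-k$, in which the vertex cover number is at most $2k$, making standard crown machinery applicable.

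In polynomial time (via a bipartite-matching / K\"onig-style construction) one then searches for a crown decomposition $(C,H,B)$ of $G$: a partition where $C$ is a nonempty independent set, $H=N(C)$, and some matching $M\subseteq G[C,H]$ saturates $H$. The reduction rule I propose replaces $(G,k)$ by $(G-(C\cup H),\ k-|H|)$. Safety in the backward direction is clean: given an irredundant set $I''$ of $G-(C\cup H)$, the lift $I:=I''\cup C$ is irredundant in $G$, because $N(C)=H$ is disjoint from $I$ (so every $c\in C$ is isolated in $G[I]$), while each king of $I''$ keeps its private witnesses inside $B$ unchanged.

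The forward direction is the technical heart and proceeds by an exchange along $M$: starting from an arbitrary irredundant $I$ in $G$ with at most $k$ non-kings, we produce an irredundant $I^{\star}$ with $C\subseteq I^{\star}$, $H\cap I^{\star}=\emptyset$, and still at most $k$ non-kings, after which $I^{\star}\cap B$ is an irredundant set of the reduced graph of the required size. The main obstacle is that a king $v\in B\cap I$ may have relied on a vertex of $H$ as its private neighbor, and once we expel all of $H$ from $I$ and insert $C$, such a witness becomes illegal (it now also neighbors the matched $c\in C\subseteq I^{\star}$). The key lemma to establish is that the matching $M$ together with the independence of $C$ and the fact that vertices of $B$ have no neighbors in $C$ force every such $v$ either to become isolated in $G[I^{\star}]$ (when its only $J$-neighbors lay in $H$) or to recover a private witness inside $B\setminus I^{\star}$; this is where the crown structure is used in a non-trivial way.

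Finally, the rule is applied exhaustively until $G$ is crown-free. In the terminal regime, a Chor--Fellows--Juedes-style counting argument, combined with the sharpened structural bounds derived above (namely $|\KE|\leq k$ and $\KI$ an independent set of size $\geq n-2k$) that tighten the usual $|V|\leq 3\tau(G)$ estimate, yields $|V(G)|\leq 3k$, completing the kernelization.
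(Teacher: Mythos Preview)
Your overall plan---crown decomposition plus Chor--Fellows--Juedes---matches the paper's, and your backward direction is fine. The forward direction, however, contains a genuine gap.

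The ``key lemma'' you propose is false. Take $C=\{c\}$, $H=\{h\}$, $B=\{v,w,x\}$ with edges $ch,\ hv,\ vw,\ wx$, and $I=\{v,w\}$. Then $I$ is irredundant ($h$ is private to $v$, $x$ is private to $w$). After the naive exchange $I^{\star}=(I\setminus H)\cup C=\{c,v,w\}$, the vertex $v$ is \emph{not} isolated in $G[I^{\star}]$ (it still sees $w$) and has \emph{no} private witness anywhere: its only non-$I^{\star}$ neighbour $h$ now also sees $c$. There is nothing in the crown structure that manufactures a new witness in $B\setminus I^{\star}$ for such a $v$.

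The paper's resolution is different: it does not try to keep these kings. It removes from $I$ the set $K(H)$ consisting of (i) kings in $H$, (ii) internal kings in $C$, and (iii) kings whose garden lies in $H$---your problematic $v$ is of type (iii)---and shows via the matching $M$ that $|K(H)|\le |C|$, so $(I\setminus K(H))\cup C$ is irredundant and no smaller. That counting step, not a witness-recovery lemma, is the crux.

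A second, smaller gap: your terminal bound. From $\alpha(G)\ge n-2k$ you get $\tau(G)\le 2k$, which via standard Chor--Fellows--Juedes yields at best a $6k$ kernel; ``tightening'' this to $3k$ is not explained. The paper instead proves directly that any maximal matching $L$ satisfies $|L|\le k$ in a \YES-instance, via a weight argument: assign weight $0$ to internal kings, $\tfrac12$ to external kings and gardens, $1$ to wilderness; then every matching edge carries total weight $\ge 1$, while the total weight equals $|V|-|I|\le k$. Feeding $|L|\le k$ into \cite[Theorem~3]{ChoFelJue2004} gives the $3k$ bound.
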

\begin{accumulate}
\subsection*{Proof of Theorem~\ref{thm:kernel2}}
\begin{proof}
Let $G=(V,E)$ be a graph and let $I$ 
be an irredundant
set of size at least $n-k$ in a graph $G$.

We use a crown reduction, see~\cite{ChoFelJue2004,Fel03}.
A crown is a subgraph $G'=(C,H,E') = G[C \cup H]$ of $G$ such that $C$ is an
independent set in $G$, $H$ are all neighbors of $C$ in $G$
(i.e., $H$ separates $C$ from $V\setminus (H\cup C)$),
and such that there is a matching $M$ of size $|H|$ between $C$ and $H$.

We first show that if $G$ contains a crown $(C,H,E')$,
 then $G$ contains a maximum irredundant set $I$ such that $I \supseteq C$ and $H\subseteq V\setminus I$.
Assume that this is wrong.
So, we have a solution $I$ and let $I=\KI\cup\KE$ be an arbitrary partition of $I$ into internal and external kings.
Let $\GE\in V\setminus I$  be  an arbitrary  set that
can serve as a set of gardens  for $\KE$. Let $\W=V\setminus(I\cup \GE)$.
%
Let $\KI^H=\KI\cap C$. 
We find partners of $\KI^H$ in $H$ by the matching $M$, formally by considering
$\KI^H_M=\KI^H\cap V(M)$,
 to which the matching associates a set $H_i^M\subseteq H$ with
$|\KI^H_M|=|H_i^M|$.

Let $I^H=H\cap I=H\cap(\KI\cup\KE)$. These are the kings in the so-called  head $H$.
Let $G^H=H\cap \GE$. These are the external gardens in the head. The corresponding kings (which are in $N[H]$)
are denoted by $K_H$.


Clearly, $(I^H\cup G^H)\cap H_i^M=\emptyset$, as well as $I^H\cap G^H=\emptyset$.
$K(H):=\KI^H\cup I^H\cup K_H$ comprise the kings that interfere with $H\cup C$.
Interfere means that either a king is a vertex in  the head, has its garden (private neighbor) in the head or the internal kings situated in the
crown $C$.\\
Moreover, $|K(H)|=|\KI^H\cup I^H\cup K_H|=|\KI^H_M\cup  I^H\cup K_H|+|\KI^H\setminus \KI^H_M|\le$\\ $ |H_i^M|+|I^H\cup K_H|+|\KI^H\setminus
\KI^H_M|\le |H| + |\KI^H\setminus \KI^H_M|
\leq |C|$. Observe that every vertex in $v \in \KI^H_M$ has its distinct partner in $u \in H$ such that also $u \in \W$. The partner $u$ can be found
via the matching $M$. 
Now note that $(I\setminus K(H))\cup C$ gives another irredundant set not smaller than $I$.
%
%
%

It remains to show
that we can always find a large crown in $G=(V,E)$ if $|V| > 3k$.

Let $L$ be a maximal matching in $G$. 
We claim that if $|L|>k$, then we can safely answer $\NO$.
Assume that $I$ is a maximum irredundant set in $G$.
Let $I=\KI\cup\KE$ be an arbitrary partition of $I$ into internal and external kings. Let $\GE\in V\setminus I$  be  an arbitrary  set that
can serve as a set of gardens  for $\KE$. Let $\W=V\setminus(I\cup \GE)$.
In general, $\W\cap V(L)$ cause no trouble for the following counting argument. 
More formally, let us assign a weight $w(x)=1$ to such wilderness vertices $x \in \W$.

If $x\in\KI$, then let $w(x):=0$. 
Observe that the vertex $y$ matched to $x$ by $L$ lies in $\W$.
So, consider $x\in (\KE\cup\GE)$.
Let us assign a weight of $\frac{1}{2}$ to each such vertex.
Notice that $|V|-|I|=\sum_{x\in V}w(x)$.
Moreover, $|L|\leq \sum_{x\in V(L)}w(x)$ according to the fact for every $\{u,v\} \in L$ we have $w(u)+w(v) \ge 1$.
Hence, if $|L|>k$, then $|V|-|I|>k$, so that we can answer $\NO$ as claimed.


Hence, $L$ contains at most $k$ edges if $G$ contains an irredundant set
of size at least $n-k$.
This reasoning also holds for maximal matchings that contain no $L$-augmenting
path of length three,
a technical notion introduced in~\cite{ChoFelJue2004}. The demonstration
given in~\cite[Theorem~3]{ChoFelJue2004}
shows the claimed kernel bound.

\end{proof}
\end{accumulate}

The above two theorems already show that $\COir$ and $\COIR$ allow
fixed-parameter tractable algorithms with a running time
exponential in $k$, a new contribution. The status of determining
the irredundance numbers when parameterized by their natural
parameters is different. While it was shown in~\cite{DFR00} that
computing $\graphIR(G)$ is W[1]-complete, we have no such
result for the lower irredundance number (but membership in W[2] is easy to see).
However, the relation of $\graphir(G)$ and $\gamma(G)$ yields an
interesting link to another (open) problem: we observe that if
computing $\graphir(G)$ was in \FPT, we could approximate
$\gamma(G)$ up to a factor of two in \FPT-time. More generally,
one could approximate $\graphir(G)$ up to a constant factor in
\FPT-time if and only if $\gamma(G)$ can be approximated up to a
constant factor in \FPT-time. The latter question is still open,
see \cite{Mar2008a} for a recent survey on \FPT\ approximation.

\section{A Simple Algorithm For Computing The Irredundance Numbers}

Our algorithm for the irredundance numbers recursively branches
on the vertices of the graph and assigns each vertex one of the four
possible labels $\KI, \KE, \GE, \W$, until a labeling that forms a
solution has been found (if one exists). If $I$ is an irredundant
set of size at least  $n-k$, then is is easy to see that
$|\KE|=|\K \setminus \G| \leq k$ and $|\G \setminus \K|+|\W| \leq
k$, which indicates a first termination condition. Furthermore,
one can easily observe that for any irredundant set $I \subseteq
V$ the following simple properties hold for  all $v \in V$: (1) if
$|N(v) \cap \K| \ge 2$ then $v \in \K \cup \W$; (2) if $|N(v) \cap
\G| \ge 2$ then
$v \in \G \cup \W$
$v \in \K \cup \W$; (3) if
$|N(v) \cap \K| \ge 2$ and $|N(v) \cap \G| \ge 2$ then $v \in \W$.
Additionally, for all $v \in \KI$, we have $N(v) \subseteq \W$.


This gives us a couple of conditions the labeling has to satisfy
in order to yield an irredundant set: each external garden is
connected to exactly one external king and vice versa. Once the
algorithm constructs a labeling that cannot yield an irredundant
set anymore
the current branch can be terminated.

\begin{definition}
Let $G = (V, E)$ be a graph and let $\KI, \KE, \GE, \W \subseteq
V$ be a labeling of $V$. Let $\overline{V} = V \setminus(\KI \cup
\KE \cup \GE \cup \W)$. We call $(\KI, \KE, \GE, \W)$ \emph{valid}
if the following conditions hold, and \emph{invalid} otherwise.
\begin{itemize}
\item $\KI, \KE, \GE, \W$ are pairwise disjoint,
\item for each $v \in \KI$, $N(v) \subseteq \W$,
\item for each $v \in \KE$, $N(v) \cap (\GE \cup \overline V) \neq \emptyset$,
\item for each $v \in \KE$, $|N(v) \cap \GE| \leq 1$,
\item for each $v \in \GE$, $N(v) \cap (\KE \cup \overline V) \neq \emptyset$, and
\item for each $v \in \GE$, $|N(v) \cap \KE| \leq 1$.
\end{itemize}
\end{definition}

As a direct consequence, we can define a set of vertices that can no longer
become external gardens or kings without invalidating the current labeling:
\begin{eqnarray*}
\KW &:=& \{\,v \in \overline V \mid \text{the labeling } (\KI, \KE, \GE \cup \{v\}, \W) \text{ is invalid}\,\}\\
\GW &:=& \{\,v \in \overline V \mid \text{the labeling } (\KI, \KE \cup \{v\}, \GE, \W) \text{ is invalid}\,\}
\end{eqnarray*}

It is easy to see that $\GW$ and $\KW$ can be computed in
polynomial time, and since vertices in $\KW \cap \GW$ can only be
wilderness, we can also assume that $\KW \cap \GW = \emptyset$
once the following reduction rules have been applied.

Let $G=(V,E)$ be a graph and let $\KI, \KE, \GE, \W \subseteq V$ be a valid labeling of $V$.
Let $\overline{V} = V \setminus(\KI \cup \KE \cup \GE \cup \W)$. We
define the following reduction rules, to be applied in this order, one at a time:
\begin{itemize}
\item[$R_1$] If there is some $v \in \W$, remove all edges incident to~$v$.
\item[$R_2$] If there is some $v \in \overline V$ with $\deg(v) = 0$,
    then set $\KI= \KI \cup \{v\}$.
\item[$R_3$]
    If there is $v \in \KE$ with $N(v) \cap \GE = \emptyset$ and
    $N(v) \cap \overline V = \{w\}$, then set $\GE := \GE \cup \{w\}$. \newline
    If there is $v \in \GE$ with $N(v) \cap \KE = \emptyset$ and
    $N(v) \cap \overline V = \{w\}$, then set $\KE := \KE \cup \{w\}$.
\item[$R_4$] For every  $v \in \KW \cap \GW$ set $\W:=\W \cup \{v\}$.
\end{itemize}
A graph and a labeling of its vertices as above is called \emph{reduced} if no further reduction
rules can be applied.

\begin{accumulate}
\begin{lemma}
\label{lem:red-rule-d1}
Let $G = (V,E)$ and $v \in V$ with $\deg(v) = 1$. Then there is a maximum
irredundant set $I$ for $G$ with $v \in I$.
\end{lemma}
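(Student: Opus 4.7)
The plan is a short exchange argument. Let $u$ denote the unique neighbor of $v$, and let $I$ be any maximum irredundant set. I may assume $v\notin I$, since otherwise there is nothing to prove. I then split on whether $u\in I$ or not, and produce from $I$ a maximum irredundant set containing~$v$.

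\textbf{Case $u\notin I$.} The strategy is to show that $I\cup\{v\}$ is already irredundant, which contradicts maximality of $I$. Since $N(v)=\{u\}$ and $u\notin I$, the vertex $v$ is isolated in $G[I\cup\{v\}]$, so $v$ plays the role of an internal king. Moreover, the only way that adding $v$ could destroy another king $w\in I$ is by consuming its private neighbor or by giving it a second $I$-neighbor; both require $v\in N(w)$, which forces $w=u\in I$, a contradiction. Hence $I\cup\{v\}$ is irredundant and larger than $I$.

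\textbf{Case $u\in I$.} Here I claim that $I':=(I\setminus\{u\})\cup\{v\}$ is irredundant and has the same cardinality as $I$, so it is also maximum and contains $v$. First, $N(v)\cap I'=\emptyset$ because $N(v)=\{u\}$ and $u\notin I'$, so $v$ is an internal king in $I'$. For each remaining king $w\in I\setminus\{u\}$ I must check that its irredundance witness survives the swap. If $w$ was internal in $I$, then $v\notin N(w)$ (else $w=u$), so $w$ is still isolated in $G[I']$. If $w$ had an external private neighbor $p\in V\setminus I$, then I need $p\in V\setminus I'$ and $N(p)\cap I'=\{w\}$. The identity $p=v$ would give $w\in N(v)=\{u\}$, i.e.\ $w=u$, contradiction; so $p\in V\setminus I'$. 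Finally, $u\in N(p)$ would yield $u\in N(p)\cap I=\{w\}$, again contradicting $w\neq u$, and $v\in N(p)$ would force $p=u\in I$, impossible. Thus $N(p)\cap I'=(N(p)\cap I)\setminus\{u\}=\{w\}$, and $w$ remains irredundant.

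The argument is elementary; the only potential slip is bookkeeping when the swap occurs in Case~$u\in I$, where one must confirm simultaneously that (i) every private neighbor of another king stays outside $I'$ and (ii) its privacy is not broken by the insertion of $v$ or the removal of $u$. Both reduce at once to the single structural fact $N(v)=\{u\}$, which makes each potential conflict collapse to the forbidden identity $w=u$.
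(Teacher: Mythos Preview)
Your Case $u\in I$ is correct, but Case $u\notin I$ contains a genuine gap. You assert that adding $v$ can only destroy a king $w\in I$ when $v\in N(w)$, but there is a third mechanism you overlooked: $v$ can spoil the \emph{privacy} of $w$'s external private neighbour $p$ by becoming a second $I$-neighbour of $p$. That requires only $v\in N(p)$, i.e.\ $p\in N(v)=\{u\}$, hence $p=u$; it does \emph{not} force $w=u$. Concretely, if $N(u)\cap I=\{w\}$, $w$ is not isolated in $G[I]$, and $u$ is $w$'s only private neighbour, then $I\cup\{v\}$ is not irredundant. This situation does occur for maximum $I$: take $V=\{v,u,w,a,b,p,q\}$ with edges $vu,\,uw,\,wa,\,wb,\,ab,\,ap,\,bq,\,pq$. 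Then $I=\{w,a,b\}$ is irredundant (private neighbours $u,p,q$ respectively), one checks by exhaustion that $\graphIR(G)=3$, so $I$ is maximum; yet $I\cup\{v\}$ fails because $N(u)\cap(I\cup\{v\})=\{v,w\}$ and $w$ has no other private neighbour.

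The repair is short: in this sub-case replace $w$ rather than augment, taking $I'=(I\setminus\{w\})\cup\{v\}$. The verification that $I'$ is irredundant is the same bookkeeping as in your Case $u\in I$, with $w$ playing the role of $u$ (every potential conflict again collapses to a forbidden identity, now $z=w$). The paper itself avoids the case split altogether by observing that a pendant $v$ with neighbour $u$ is a crown $(\{v\},\{u\},\{vu\})$ and invoking the crown exchange proved for the $3k$-vertex kernel, which automatically removes the right king ($u$, or the king whose garden is $u$) before inserting $v$.
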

This follows immediately from the proof of Theorem~\ref{thm:COIR-kernel},
since a vertex $v$ with only one neighbor~$u$ induces a crown $(\{v\},\{u\},\{e\})$
with $e = \{u,v\}$.

\end{accumulate}
\begin{accumulate}
\subsection*{Proof of Lemma~\ref{lem:firstrules}}
\begin{lemma}\label{lem:firstrules}
Let $G=(V,E)$ be a graph and let $\KI, \KE, \GE, \W \subseteq V$ be a valid labeling of $V$ and
let  $I\subseteq V$ be an irredundant set of size $k$ that respects
$(\KI, \KE, \GE, \W)$.
Then there is also an irredundant set of size $k$ that respects
$R(\KI, \KE, \GE, \W)$, where $R$ is any of the reduction rules.

\end{lemma}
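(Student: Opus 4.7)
The plan is to prove safety for each rule $R_1,\ldots,R_4$ separately, showing that if an irredundant set $I$ of size $k$ respects the current labeling $(\KI,\KE,\GE,\W)$, then an irredundant set of size $k$ (or close to it, when the rule relabels a vertex into $I$) respects the reduced labeling. The uniform tactic will be to fix an arbitrary completion $(\KI',\KE',\GE',\W')$ of the labeling that witnesses $I$ as its king set, and then either argue that the same $I$ already respects the reduced labeling, or perform a small local swap.

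For $R_1$, the labeling is unchanged, so I only need to check that $I$ remains irredundant in the graph after deleting edges incident to some $v\in\W$. Since $v\notin I$ (wilderness is disjoint from kings) and $v\notin\GE$ (pairwise disjointness), none of the sets $N(u)\cap I$ for $u\in\KI$ or $N(g)\cap I$ for $g\in\GE$ is affected, so all private-neighbor witnesses persist. For $R_2$, an isolated $v\in\overline V$ is either already an internal king in $I$ (forced by its isolation), or $I\cup\{v\}$ is still irredundant and respects the extended labeling; this grows $I$ by one, which is harmless for the \COIR\ question and in fact forced by maximality for \textsc{exact}~\COir.

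For $R_3$, take the $\KE$-case; the $\GE$-case is symmetric. In every completion witnessing $I$, the external king $v\in\KE$ owns a private external garden belonging to $\GE'\setminus I$. That garden lies in $\GE$ or in $\overline V$; the hypothesis $N(v)\cap\GE=\emptyset$ rules out the first, so it must come from $N(v)\cap\overline V=\{w\}$. Hence every completion already has $w\in\GE'$, and the same $I$ respects $(\KI,\KE,\GE\cup\{w\},\W)$.

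The delicate case is $R_4$. If $v\in\KW\cap\GW$, then $v$ cannot be labeled $\KE$ or $\GE$ in any completion, leaving $\KI$ or $\W$ as the only options. The plan is to show that insisting on $v\in\W'$ loses no solutions of interest. If some completion places $v\in\KI'$, then $N(v)\subseteq\W'$; combined with the exhaustion of $R_1$ (which has removed all edges incident to current wilderness vertices) this forces every surviving neighbor of $v$ to come from $\overline V$ and to be relabeled $\W$ in the completion. I would then trade $v$ against one of its $\overline V$-neighbors, promoting the neighbor to internal king while demoting $v$ to wilderness. The main obstacle will be verifying that this swap preserves all other irredundance witnesses and the size of $I$; this requires a careful case analysis on the completion's behavior in the neighborhood of $v$, since the promoted neighbor may itself be adjacent to several other kings and gardens whose roles must not be disturbed.
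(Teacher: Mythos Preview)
Your handling of $R_1$, $R_2$, and $R_3$ is correct and essentially matches the paper's (very terse) argument. The problem is $R_4$, where you set up a swap for the case $v\in\KI'$ and leave it unfinished.

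That case is in fact vacuous, and your own setup shows why. You observe that, with $R_1$ exhausted, $v\in\KI'$ would force every surviving neighbor of $v$ to lie in $\overline V$. But then $v\notin\KW$: adding $v$ to $\GE$ keeps the labeling valid, since $v$ has an $\overline V$-neighbor (it is not isolated, or $R_2$ would have fired), $v$ has no $\KE$-neighbors at all, and the only remaining way validity could break---some $g\in\GE$ whose unique $\KE\cup\overline V$ neighbor is $v$---is excluded because $R_3$ would have triggered on $g$ first. This contradicts $v\in\KW$. Hence no completion places $v$ in $\KI'$; combined with $v\notin\GE'$ (from $v\in\KW$) and $v\notin\KE'$ (from $v\in\GW$), the only option is $v\in\W'$, and $R_4$ is safe for the \emph{same} irredundant set $I$. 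The paper's one word ``obvious'' is hiding precisely this observation; your exchange argument is aimed at a case that never arises and is therefore both unnecessary and---as you yourself note---hard to complete in general.
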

\begin{proof}
This is obviously true for $R_1$ and $R_2$, since isolated vertices can
always be added to $\KI$ without decreasing the size of an solution,
and edges incident to vertices in wilderness cannot result in an invalid
labeling.

$R_3$ is also straightforward, since there the corresponding $v \in \KE $ needs a
vertex in $\GE$ as a neighbor, but there is only one possibility left. (Likewise for $v \in \GE $).

$R_4$ is obvious.
\end{proof}

\end{accumulate}

\begin{algorithm}[tbp]
\begin{flushleft}
Algorithm $\textsc{CO-IR}(G,k,\KE,\KI,\GE,\W)$: \\
Input: Graph $G = (V, E)$, $k \in \mathbf N$, labels $\KE$, $\KI$, $\GE$, $\W \subseteq V$
\medskip

\activatealgo\obeylines%
\textrm{01:} Compute the sets $\KW, \GW$.
\textrm{02:} Apply the reduction rules exhaustively, updating $\KW$ and $\GW$.
\textrm{03:} [if] the current labeling is invalid [then] return \NO.
\textrm{04:} [if] $\varphi(k, \KE, \GE, \W) <0 $ [then] return \NO.
\textrm{05:} [if] $|\KE| + |\W| = k$ [or] $|\GE| + |\W| = k$ [or] all vertices are labeled [then]
\textrm{06:}\> [return] whether $V \setminus (W \cup \GE)$ is a solution.
\textrm{07:} [if] $\KW \neq \emptyset$ (or analogously, $\GW \neq \emptyset$) [then]
\textrm{08:}\> choose $v \in \KW$;
\textrm{09:}\> [return] $\textsc{CO-IR}(G,k, \KE \cup \{v\}, \KI, \GE, \W)$ [or] %
 $\textsc{CO-IR}(G,k, \KE, \KI, \GE, \W \cup \{v\})$;
\textrm{10:} Choose (in this preferred order) unlabeled $v \in V$ of degree one, of maximum degree
\textrm{\phantom{13:}} with $N(v) \cap (\GE \cup \KE) \neq \emptyset$ or any unlabeled $v$ with maximum degree.
\textrm{11:} [return] $\textsc{CO-IR}(G,k, \KE, \KI, \GE, \W \cup \{v\})$ [or] %
 $\textsc{CO-IR}(G,k, \KE, \KI \cup \{v\}, \GE, \W \cup N(v))$
\textrm{\phantom{14:}}\> [or] $\exists u\in N(v) \setminus (\KE \cup \KI \cup \W)\colon\textsc{CO-IR}(G,k, \KE\cup \{v\}, \KI, \GE\cup \{u\}, \W)$
\textrm{\phantom{14:}}\> [or] $\exists u \in N(v) \setminus (\GE \cup \KI \cup \W)\colon\textsc{CO-IR}(G,k, \KE\cup \{u\}, \KI, \GE\cup \{v\}, \W)$ %
\end{flushleft}
\caption{\label{alg:coir} A fast yet simple algorithm for \COIR.}
\end{algorithm}

Since the algorithm uses exhaustive branching, we easily obtain:
\begin{lemma}
\label{lem:a1-is-correct}
Algorithm~\ref{alg:coir} correctly solves \COIR.
\end{lemma}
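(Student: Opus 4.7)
The plan is to prove Lemma~\ref{lem:a1-is-correct} by induction on the recursion depth of Algorithm~\ref{alg:coir}, establishing both soundness (every \YES\ answer is correct) and completeness (whenever a solution exists, the algorithm returns \YES). The central invariant is that any irredundant set $I \subseteq V$ of size $\geq n-k$ naturally corresponds to a valid labeling $(\KI^*, \KE^*, \GE^*, \W^*)$ of $V$ in which $\KI^* \cup \KE^* = I$ and $|\KE^*| + |\W^*| \leq k$, $|\GE^*| + |\W^*| \leq k$. Conversely, any valid and complete labeling with $|\KI| + |\KE| \geq n - k$ yields such a set. Soundness is then immediate: the algorithm returns \YES\ only in line 06 after explicitly verifying that $V \setminus (\W \cup \GE)$ is an irredundant set of the required size.

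For completeness, I would fix an optimal witness $I$ with its induced labeling and argue that the recursion tree explores a path under which every label assigned to a vertex agrees with this witness. The validity check in line 03 is safe, since an invalid labeling has no extension to any irredundant set; the budget test in line 04 is safe by definition of $\varphi$; and the termination in lines 05--06 is correct because once $|\KE| + |\W| = k$ or $|\GE| + |\W| = k$, no unlabeled vertex can be assigned to $\KE$, $\GE$, or $\W$ without exceeding the budget, so every remaining vertex must become $\KI$, which is exactly what the test in line 06 verifies. Lemma~\ref{lem:firstrules} ensures that the four reduction rules preserve the existence of a compatible solution, so applying them in line 02 is harmless.

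The heart of the argument is exhaustiveness of the two branching rules. For line 11, the four branches exactly enumerate the four labels a vertex $v$ may receive in the witness: $v \in \W$; $v \in \KI$ (which by the validity condition forces $N(v) \subseteq \W$, matching the recursive call); $v \in \KE$ with its external garden being some neighbor $u$; and $v \in \GE$ with its attached external king being some neighbor $u$. For line 09, a vertex $v \in \KW$ cannot receive the label $\GE$ by definition of $\KW$, so the witness must assign $v$ one of $\KE$, $\KI$, or $\W$. The claim is that, on a reduced instance, the $\KI$ possibility reduces to the $\W$ branch: indeed, $v \in \KW$ either has two $\KE$ neighbors, or a $\KE$ neighbor whose unique allowed garden is elsewhere, or no neighbor in $\KE \cup \overline V$; in the first two subcases, any $\KI$ neighborhood requirement contradicts the existing $\KE$ label, and in the last subcase the vertex has no unlabeled neighbor that could serve as its private garden and is effectively forced into $\W$ after $R_4$.

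The main obstacle in writing a careful proof is exactly this last exhaustiveness verification at line 09: ruling out the $\KI$-branch requires a case analysis against the definition of $\KW$ combined with the invariant that rules $R_1$--$R_4$ have been applied to fixpoint. Once this is nailed down, the remainder is a routine induction: every branch strictly extends the current labeling, the depth is bounded by $|V|$, and at every leaf either a solution is verified (\YES) or no solution respecting the current labeling exists (\NO).
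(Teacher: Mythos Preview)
Your approach mirrors the paper's own proof: soundness via the explicit verification in line~06, and completeness by fixing a witness labeling and following it down one branch of the recursion tree. You are in fact more careful than the paper on one point: the paper simply asserts that for $v\in\KW$ ``branching whether $v\in\KE$ or $v\in\W$ yields the correct solution'' without explaining why the $\KI$ option may be skipped, whereas you explicitly try to justify this.

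However, your justification for the subcase where $v\in\KW$ because $N(v)\cap(\KE\cup\overline V)=\emptyset$ is not right. You write that $v$ ``has no unlabeled neighbor that could serve as its private garden and is effectively forced into $\W$ after $R_4$'', but a $\KI$ vertex needs no external garden, and $R_4$ only fires when $v\in\KW\cap\GW$, which the reductions have already emptied. The correct argument is simpler and uniform across all three subcases: in a reduced valid instance, every $v\in\KW$ has a neighbor in $\KE\cup\GE$. Indeed, in the subcase $N(v)\cap(\KE\cup\overline V)=\emptyset$, rule~$R_1$ and the validity condition on $\KI$ force $N(v)\subseteq\GE$, and $R_2$ guarantees $N(v)\neq\emptyset$; in the other two subcases $v$ already has a $\KE$ neighbor by definition. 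Since assigning $v$ to $\KI$ would require $N(v)\subseteq\W$, contradicting the existing $\KE$- or $\GE$-label on that neighbor (which by the inductive invariant also holds in the witness labeling), the $\KI$ possibility is indeed ruled out. With this fix your argument goes through.
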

\begin{accumulate}
\subsection*{Proof of Lemma~\ref{lem:a1-is-correct}}
\begin{proof}
The algorithm uses exhaustive branching and enumerates all possible solutions up to isomorphism, i.e.,
up to different colorings of connected components.

Note that in each recursive call at least one vertex is added to $\KE
\cup \GE \cup \W$. Thus, the algorithm terminates after at most $2k$
recursive calls (see also the runtime analysis).
Moreover, it can never falsely output \YES, as solutions are verified in Line~5.

Thus, we can assume that there is some solution irredundant set $\mathcal I$,
with a corresponding set of gardens $\mathcal G$. But then, $\mathcal I$
and $\mathcal G$ imply a labeling
$(\overline \KI, \overline \KE,\overline  \KI,\overline  \W)$
by setting $\overline \KI :=  \mathcal I \cap \mathcal G$,
$\overline \KE := \mathcal I \setminus \mathcal G$,
adding a unique external garden in $N(v) \cap \mathcal G$ for
each $v \in \overline \KE$ into $\GE$ and setting
$\W = V \setminus (\KI \cup \KE \cup \GE)$.

It remains to show inductively
that Algorithm~\ref{alg:coir} returns \YES\ if called on a labeling
$(\KI, \KE, \GE, \W)$ such that
$\overline \KI \supseteq \KI$,
$\overline \KE \supseteq \KE$,
$\overline \GE \supseteq \GE$, and
$\overline \W \supseteq \W$.

Note that  $(\overline \KI, \overline \KE,\overline  \KI,\overline  \W)$
can only be valid, if $(\KI, \KE, \GE, \W)$ is valid as well.
Moreover, $(\KI, \KE, \KI, \W)$ must obviously be valid.
Since $|I| \geq |V|-k$, we have $|\KE| + |\W| \leq k$ and  $|\GE| + |\W|
\leq k$.

If we have $|\KE| + |\W| = k$ and  $|\GE| + |\W| = k$, the algorithm obviously
outputs \YES, if we have found $I$.
If  $|\KE| + |\W| < k$ or  $|\GE| + |\W| < k$, there are two possibilities:

\begin{itemize}
\item All vertices are labeled, and the algorithm has found $I$, which is checked
in Line~5.
\item Some vertex $v \in V$ is not labeled yet. If $v \in \KW$, it cannot be
in $\overline \GE$, as this would imply that
$(\overline \KI, \overline \KE,\overline  \KI,\overline  \W)$ is invalid,
since $( \KI,  \KE,  \GE \cup \{v\},  \W)$ is invalid.
Thus branching whether $v \in \KE$ or $v \in \W$ yields the correct
solution (and similar for $v \in \GW$).

If $v\in \overline V$, Algorithm~\ref{alg:coir} exhaustively branches on $v$
(including the choice which vertex acts as external garden or external king),
which obviously yields the correct solution.
\end{itemize}
\end{proof}
\end{accumulate}

\begin{remark}\label{rem-coir}
Algorithm~\ref{alg:coir} can be also used, with slight modifications, to answer
the question if a graph $G$ has an inclusion-minimal co-irredundant set of size exactly $k$.
Namely, if the potential dropped to zero, then either the current labeling corresponds to a valid
co-irredundant set of size $k$ that is inclusion-minimal or not; this has to be tested in addition.  
\end{remark}


\begin{accumulate}
\subsection*{Comments on our measure}

Let $T(k, \KE, \GE, \W)$
be the number of recursive calls that reach Line~5 where none of the (possibly zero)
following recursive calls (in Lines~9 and~11) reach this line.  This way, we do not
count recursive calls that fail immediately in the first four lines. This allows us to ignore
the up to $2\deg(v)+2$ failing calls (Line~11), that only contribute a
polynomial runtime factor (since they do not trigger further recursive calls).

\end{accumulate}

Let $T(k, \KE, \GE, \W)$
be the number of recursive calls that reach Line~5 where none of the (possibly zero)
following recursive calls (in Lines~9 and~11) reach this line.  
Since all
recursive calls only require polynomial time, the running time of Algorithm~\ref{alg:coir}
is bounded by $\Oh^*(T(k, \KE, \GE, \W))$.  Let our
 measure be:
$$
\varphi(k, \KE, \GE, \W) =  k - |\W| - 0.5|\KE| - 0.5|\GE|
$$

\begin{accumulate}

We claim that
$
    T(k, \KE, \GE, \W) \le \alphadrei{}{}{},
$
for any $k$, $\KE$, $\GE$, $\W$ and $\alpha \le \thealpha$.
Then in particular, \COIR\ and \COir\ can be solved in
time
$\Oh^*(\alpha^{\varphi(k, \emptyset, \emptyset, \emptyset)}) \le  \Oh^*(\thealpha^k)$.

We prove the claim by induction over search trees for arguments $(G, k, \KE, \GE, \W)$.
In the following, we analyze each of the many possible
cases how the algorithm branches.  To get a better bound, we sometimes
include subsequent calls in the estimation.

We first show that $\varphi(k, \KE, \GE, \W)$ is indeed  a correct
measure. If $\varphi(k, \KE, \GE, \W)<0$ no recursive calls will be
triggered.

\begin{lemma}
\label{lem:correctmeasure}
If $\varphi(k, \KE, \GE, \W)<0$ then the algorithm correctly outputs \NO.
\end{lemma}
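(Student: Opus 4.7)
The plan is to argue by contradiction: assume $\varphi(k,\KE,\GE,\W)<0$ but there still exists an irredundant set $I\subseteq V$ with $|I|\ge n-k$ that respects the current labeling, meaning $\KI\cup\KE\subseteq I$, $\GE\cup\W\subseteq V\setminus I$, and each $v\in\KE$ has its chosen private neighbor in $\GE$. From such an $I$ I would fix a choice of external-garden for every remaining external king in $V\setminus(\KI\cup\KE\cup\GE\cup\W)$ to obtain a complete valid labeling $(\KI',\KE',\GE',\W')$ extending the current one, i.e.\ with $\KE\subseteq\KE'$, $\GE\subseteq\GE'$, $\W\subseteq\W'$. The measure $\varphi$ is monotone under such enlargement, so
$$\varphi(k,\KE,\GE,\W)\;\ge\;\varphi(k,\KE',\GE',\W').$$
Hence it suffices to prove $\varphi(k,\KE',\GE',\W')\ge 0$, which contradicts the hypothesis.

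To bound $\varphi(k,\KE',\GE',\W')$ from below I would use two inequalities. First, since the sets $\GE'$ and $\W'$ are disjoint and both contained in $V\setminus I$, and $|V\setminus I|\le k$, one gets $|\GE'|+|\W'|\le k$. Second, in the complete valid labeling the set of unlabeled vertices is empty, so the validity conditions force every $v\in\KE'$ to have exactly one neighbor in $\GE'$ and every $v\in\GE'$ to have exactly one neighbor in $\KE'$; this yields a perfect matching between $\KE'$ and $\GE'$, so $|\KE'|=|\GE'|$. Substituting,
$$|\W'|+\tfrac{1}{2}|\KE'|+\tfrac{1}{2}|\GE'| \;=\; |\W'|+|\GE'| \;\le\; k,$$
which is precisely $\varphi(k,\KE',\GE',\W')\ge 0$. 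Combined with the monotonicity above, this gives $\varphi(k,\KE,\GE,\W)\ge 0$, the desired contradiction.

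The identity $|\KE'|=|\GE'|$ is the step I would check most carefully, since it is exactly what makes the choice of weight $0.5$ on $\KE$ and $\GE$ in the measure legitimate: without it, the current labeling could in principle have $|\KE|>|\GE|$ (indeed it can, temporarily, after the branch of Line~9 that moves a vertex from $\KW$ into $\KE$ without simultaneously placing a garden), and the inequality would fail. Everything else is routine: $|\GE'|+|\W'|\le k$ is just the size bound on $V\setminus I$, and the monotonicity of $\varphi$ under enlarging any of $\KE,\GE,\W$ is immediate from the formula.
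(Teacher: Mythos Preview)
Your argument is correct. You extend the partial labeling to a complete one induced by a hypothetical solution $I$, observe that $\varphi$ can only decrease under this extension, and then verify $\varphi\ge 0$ on the complete labeling using $|\KE'|=|\GE'|$ together with $|\GE'|+|\W'|\le |V\setminus I|\le k$. This is clean and self-contained.

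The paper takes a different route: it shows that $\varphi<0$ forces $|\KE|+|\W|+|\KW|>k$ or $|\GE|+|\W|+|\GW|>k$ (by assuming both fail, adding the inequalities, and deducing $\varphi\ge 0$). The step from there to ``the algorithm correctly outputs \NO'' is left implicit; it relies on the fact that every $\KW$-vertex must eventually land in $\KE\cup\W$ (and dually for $\GW$), so in any completion one of $|\KE'|+|\W'|$, $|\GE'|+|\W'|$ would exceed $k$, which together with $|\KE'|=|\GE'|$ is incompatible with $|I|\ge n-k$. Your approach bypasses the $\KW/\GW$ detour entirely and uses only monotonicity of $\varphi$, making the role of the weight $\tfrac12$ (via the matching $|\KE'|=|\GE'|$) fully explicit. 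Both arguments ultimately rest on that identity; yours just surfaces it one step earlier and is arguably more transparent.
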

\begin{proof}
 If $\varphi(k, \KE, \GE, \W)<0$ then we claim that $|\KE| + |\W|+|\KW| > k$ or $|\GE| + |\W|+ |\GW| > k$.
Assume the contrary then $|\KE| + 2|\W|+|\GE|+|\KW|  +  |\GW| \le 2k$. Therefore we can deduce
$0.5\cdot (|\KE| +|\GE|) + |\W|\le k$ which contradicts the fact that $\varphi(k, \KE, \GE, \W)<0$.
\end{proof}
\end{accumulate}

\begin{lemma}
\label{lem:bound1}
$T(k, \KE, \GE, \W) \le \alphadrei{}{}{} \text{ with } \alpha \le \thealpha$.
\end{lemma}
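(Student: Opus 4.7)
The plan is to proceed by strong induction on the measure $\varphi := \varphi(k, \KE, \GE, \W)$, with the inductive claim $T(k, \KE, \GE, \W) \le \alpha^{\varphi}$ for $\alpha = \thealpha$. The base case $\varphi < 0$ is immediate: the check in Line~4 of Algorithm~\ref{alg:coir} rejects before any recursive call reaches Line~5, so $T = 0 \le \alpha^{\varphi}$. For the inductive step I apply the standard Measure~\&~Conquer bookkeeping: whenever a branching produces children with measures $\varphi - a_1, \ldots, \varphi - a_b$, the induction hypothesis yields $T \le \sum_i \alpha^{\varphi - a_i}$, which is at most $\alpha^{\varphi}$ provided the branching vector $(a_1, \ldots, a_b)$ has branching number at most $\alpha$. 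It therefore suffices to exhibit, for every branching the algorithm can perform, a vector whose branching number stays below $\thealpha$.

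The analysis splits according to which branch of Algorithm~\ref{alg:coir} fires. The easy case is Line~9 (some $v \in \KW$, with $\GW$ symmetric): the two sub-branches place $v$ into $\KE$ or into $\W$, dropping $\varphi$ by $0.5$ and $1$ respectively. The vector $(0.5, 1)$ has branching number about $2.62$, safely below $\thealpha$. The harder case is Line~11, in which the algorithm branches on an unlabeled $v$ chosen by the preference rule. Writing $d := |N(v) \cap \overline{V}|$, the branching generates four kinds of children: $v \to \W$ (cost $1$); $v \to \KI$ with $N(v) \cap \overline{V} \to \W$ (cost $d$); and, for each eligible $u \in N(v) \cap \overline{V}$, either $v \to \KE$, $u \to \GE$ or $v \to \GE$, $u \to \KE$ (each of cost $1$). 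For $d \ge 4$ the resulting vector of length $2d + 2$ already yields a branching number below $\thealpha$, so the difficulty lies in $d \in \{1, 2, 3\}$.

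The heart of the proof is the small-degree analysis, where the naive count does not suffice. I would tighten it in two complementary ways. First, the preference rule in Line~10 (a degree-one vertex if available, otherwise a maximum-degree vertex with a labeled neighbour) combined with the definition of $T$ given just before the statement, which discards calls pruned in Lines~3--4, eliminates many of the $2d$ external-king/garden sub-branches: the presence of a labeled neighbour forces consistency constraints that render most choices of $u$ immediately invalid. Second, I would look two levels down the search tree in the remaining sub-branches and show that either a reduction rule $R_3$ or $R_4$ fires (gaining extra measure beyond the $0.5 + 0.5$ already recorded) or a vertex is created in $\KW \cup \GW$, so that the next call falls into the cheap Line~9 branching. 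Composing two levels of the recurrence in this way yields refined branching vectors whose numerical branching numbers drop strictly below $\thealpha$.

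The principal obstacle is precisely this two-level bookkeeping for $d \in \{1, 2\}$: one must enumerate the possible label patterns on $N(v)$ and on the neighbours of its neighbours and verify $\sum_i \alpha^{-a_i} \le 1$ in each pattern, leaning on the reduction rules to merge or prune sub-cases. Once every pattern is covered, each verified branching number is bounded by $\thealpha$, the induction step closes, and the claimed bound $T(k, \KE, \GE, \W) \le \alphadrei{}{}{}$ for $\alpha \le \thealpha$ follows.
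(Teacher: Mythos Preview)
Your overall inductive scheme is correct, but there is a concrete error that undermines the plan. You assert that for $d\ge 4$ the naive branching vector at Line~11 already has branching number below $\thealpha$. This is false. With all neighbours of $v$ unlabeled, the vector is $(1,d,\underbrace{1,\ldots,1}_{2d})$, so the inequality to check is $(2d+1)\alpha^{-1}+\alpha^{-d}\le 1$. For $\alpha=\thealpha$ and $d=4$ the left side is about $2.35$, and in fact it diverges as $d\to\infty$. Thus the naive vector never works, for any $d$; the difficulty is not confined to small degree and a ``two-level lookahead'' cannot repair it, since the number of $\KW/\GW$ vertices created in a single $(\KE,\GE)$ branch is roughly $\deg(v)+\deg(u)-2$, which may be arbitrarily large.

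The paper closes this gap by proving a \emph{strictly stronger} statement for the $\KW\cup\GW\neq\emptyset$ case first: with $\beta:=\alpha^{-1}+\alpha^{-0.5}<1$, one shows by a separate induction on $d:=|\KW\cup\GW|$ that $T\le \alpha^{\varphi}\beta^{\,d}$. This bonus factor $\beta^{\,d}$ is then fed back into the analysis of Line~11: every branch that pairs $v\in\KE$ with some $u\in\GE$ creates (via the reduction rules) at least $\deg(v)+\deg(u)-2$ vertices in $\KW\cup\GW$, so the induction hypothesis for that child gives $\alpha^{-1}\beta^{\deg(v)+\deg(u)-2}$ rather than merely $\alpha^{-1}$. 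Summing yields $\alpha^{-1}+\alpha^{-d}+2d\,\alpha^{-1}\beta^{\,d}$, which is $\le 1$ for all $d\ge 2$ precisely at $\alpha=\thealpha$. Your sketch gestures at ``a vertex is created in $\KW\cup\GW$'', but without the quantitative $\beta^{\,d}$ strengthening of Case~1 the accounting cannot be closed; this strengthening is the missing idea.
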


\begin{accumulate}
\subsection*{Proof of Lemma~\ref{lem:bound1}}
\begin{proof}
Let $\beta:=\frac{1+\alpha^{0.5}}{\alpha}=\alpha^{-1}+\alpha^{-0.5}<1$.
Note that the first case is used later on to overcome some bad cases. We are thus
forced to analyze it very closely. More precisely, we need to guarantee that a certain
number of ``good'' branches is executed.
\begin{case} $\KW \cup \GW \neq \emptyset$.
\label{case1}
\end{case}
\begin{proof}
We can assume that both $|\KE|+|\W| +|\KW| \le k$ and $|\GE|+|\W| +|\GW| \le k$ hold,
because otherwise this branch fails immediately. Let $d=|\KW \cup \GW|=|\KW|+|\GW|$, where the last equivalence follows from $R_5$.
Note that this implies that $\varphi(k,\KE,\GE,\W) \geq (|\KW| + |\GW|)/2$.

We show $T(k,\KE,\GE,\W) \leq \alphadrei{}{}{} \beta^{d}$ by induction over
$d$. Let $v \in \KW \cup \GW$ and  assume, w.l.o.g., that  $v \in \KW$.

We can assume that both recursive calls on $v$ do reach Line~5, because otherwise
we have either $T(k,\KE, \GE, \W) = T(k,\KE \cup \{v\}, \GE, \W) \leq \alphavier{}{}{}{-0.5} $ or
$T(k,\KE, \GE, \W) = T(k,\KE , \GE, \W \cup \{v\}) \leq \alphavier{}{}{}{-1} $,
and $\alpha^{-0.5} \leq \beta$ as well as $\alpha^{-1} \leq \beta$.

For $d=1$, we therefore obtain by our overall induction over $\varphi(k,\KE,\GE,\W)$
\begin{eqnarray*}
T(k,\KE, \GE, \W) & \leq  &T(k,\KE \cup \{v\}, \GE, \W) + T(k,\KE , \GE, \W \cup \{v\})\\
& \leq & \alphadrei{}{}{} (\alpha^{-0.5}+\alpha^{-1})
 =  \alphadrei{}{}{} \beta.
\end{eqnarray*}
We can hence assume $d>1$.

If both recursive calls reach Line~5, we have
 $\varphi(k,\KE \cup \{v\},\GE,\W) \geq  |\KW\setminus \{v\} \cup \GW|/2 = (d-1)/2$
and $\varphi(k,\KE ,\GE,\W \cup \{v\}) \geq  |\KW\setminus \{v\} \cup \GW|/2 = (d-1)/2$, because
otherwise the condition in Line~4 is already true.
%
If the call where $v \in \KE$ satisfies one of the conditions in Line~5, we therefore have
$$T(k,\KE \cup \{v\},\GE,\W) =1 \leq \alphadrei{}{}{}\alpha^{-(d-1)/2} \alpha^{-0.5}$$
and
$$T(k,\KE,\GE,\W\cup\{v\}) =1 \leq \alphadrei{}{}{}\alpha^{-(d-1)/2} \alpha^{-1}$$
in the call where  $v \in \W.$
Thus, the two recursive calls by induction over $d$ yield the bounds
\begin{eqnarray*}
T(k,\KE\cup \{v\} ,\GE,\W) \leq \alphadrei{}{}{}  \alpha^{-0.5} \max \left\{  \alpha^{-(d-1)/2},
 \beta^{d-1}     \right\},
\end{eqnarray*}
and
\begin{eqnarray*}
T(k,\KE ,\GE,\W\cup \{v\}) \leq \alphadrei{}{}{} \alpha^{-1} \max \left\{  \alpha^{-(d-1)/2}  ,
\beta^{d-1}      \right\}.
\end{eqnarray*}
But since
$$\alpha^{-(d-1)/2} \leq  \sum_{i=0}^{d-1} \binom{d-1}{i} \alpha^{-(d-1)+0.5i}
\leq\left(\frac{1+\alpha^{0.5}}{\alpha}\right)^{d-1}= \beta^{d-1}, $$
we obtain
\begin{eqnarray*}
 T(k,\KE,\GE,\W) &\leq & T(k,\KE \cup\{v\},\GE,\W) + (k,\KE,\GE,\W\{v\}) \\
&  \leq&  \alphadrei{}{}{} \left(\beta^{d-1}   \alpha^{-0.5}   +  \beta^{d-1}   \alpha^{-1}\right)  \\
& = & \alphadrei{}{}{} \beta^{d-1}  (\alpha^{-1}+\alpha^{-0.5})
 =  \alphadrei{}{}{} \beta^{d}.
\end{eqnarray*}
\end{proof}

In the following, we now assume $\KW = \GW = \emptyset$ and let
$\overline{V} = V \setminus(\KI \cup \KE \cup \GE \cup \W)$
the set of yet unlabeled vertices.
Also note that for all $v \in \overline{v}$ we have that $|N(v) \cap \KE| \le 1$ and $|N(v) \cap \GE| \le 1$.

\begin{case}
\label{case-deg1}
$v \in \overline V$ such that $N(v) = \{u\}$.
\end{case}
\begin{proof}
Let $v \in \overline V$ be a vertex of degree one and let $\{u\}=N(v)$.  By the reduction rules (removal of edges),
$u \notin \W$ and by the preferred branching for $\GW$ and $\KW$ vertices, $u \in \overline V$.
W.l.o.g., we can assume $N := N(u) \setminus \{v\} \neq \emptyset$.  Distinguish the following cases:

If there is $z \in N$ with $z \in \overline V$, then the $v \in \GE$ and $v \in \KE$ branches
restrict $z$ and we gain (after inserting Case 1 above once):
\begin{eqnarray*}
 T(k,\KE,\GE,\W) & \leq & T(k,\KE,\GE,\W \cup \{v\}) + T(k,\KE,\GE,\W \cup \{u\}) + \\
& & T(k,\KE \cup \{v\},\GE \cup \{u\},\W) + T(k,\KE \cup \{u\},\GE \cup \{v\}, \W) \\
 & \leq &  \alphadrei{}{}{} \left( \alpha^{-1}+\alpha^{-1} +  2 \alpha^{-1} \cdot \beta \right)\\
& \leq & \alphadrei{}{}{}.
\end{eqnarray*}

If there is $z \in N \cap \KW$ (or analogously, $z \in N \cap \GW$),
then in the branch where $v \in \KE$ and $u \in \GE$ we immediately obtain $z \in \W$ and therefore
\begin{eqnarray*}
 T(k,\KE,\GE,\W) & \leq & T(k,\KE,\GE,\W \cup \{v\}) + T(k,\KE,\GE,\W \cup \{u\}) + \\
& & T(k,\KE \cup \{v\},\GE \cup \{u\},\W \cup \{z\}) + T(k,\KE \cup \{u\},\GE \cup \{v\}, \W) \\
 & \leq &  \alphadrei{}{}{} \left( \alpha^{-1}+\alpha^{-1} +  \alpha^{-1.5} + \alpha^{-1} \right)\\
& \leq & \alphadrei{}{}{}.
\end{eqnarray*}

Finally, if there is $z \in N \cap \KE$ (and similarly, $z \in N \cap \GE$), then
the branch $v \in \KE$ immediately fails.  Therefore,

\begin{eqnarray*}
 T(k,\KE,\GE,\W) & \leq & T(k,\KE,\GE,\W \cup \{v\}) + T(k,\KE,\GE,\W \cup \{u\}) + \\
& & T(k,\KE \cup \{u\},\GE \cup \{v\}, \W) \\
 & \leq &  \alphadrei{}{}{} \left( \alpha^{-1}+\alpha^{-1} + \alpha^{-1} \right) \leq \alphadrei{}{}{}.
\end{eqnarray*}
\end{proof}

\begin{case}
$v \in \overline V$ such that $N(v) \cap \KE = \{v_K\}$ and $N(v) \cap \GE = \{v_G\}$ for some
vertices $v_K$ and $v_G$.
\label{case2}
\end{case}
\begin{proof}
Note that the branch $v \in \KI$ does not reach Line~5, as this imposes an invalid coloring.
The same holds for each recursive call where $v \in \KE$ and $u \in N(v) \setminus (\KE \cup \KI \cup \W)$
except for the case where $u = v_G$ due to invalid labeling ($v$ is a $\KE$ king
with two external gardens $\GE$). Similarly, the branches where $u \neq v_K$ do not
reach Line~5 for the cases we set $v \in \GE$.

Furthermore, $N(v_K) \cap \GE = \emptyset$, because this would imply
that $v \in \KW$. Similarly, $N(v_G) \cap \KE = \emptyset$. In particular,
$v_K$ and $v_G$ are not adjacent.
Moreover, $v_K$ and $v_G$ must have at least another unlabeled neighbor
(maybe the same) since the instance is reduced.

If $|N(v_K) \setminus \KE| =2$, setting $\KE = \KE \cup \{v\}$
implies that the remaining unlabeled neighbor $u_K$ of $v_K$ must be added to $\GE$
(and analogously $u_G$ in case we consider $v_G$).

Thus, if both $|N(v_K) \setminus \KE| =2$ and $|N(v_G) \setminus \GE| =2$, we obtain
\begin{eqnarray*}
T(k,\KE,\GE,\W) & \leq&  T(k, \KE\cup\{v\},\GE\cup\{u_k\},\W) + T(k, \KE\cup\{u_g\},\GE\cup\{v\},\W )\\
& & + T(k, \KE,\GE,\W\cup\{v\})\\
& \leq & \alphadrei{}{}{} (\alpha^{-1} + \alpha^{-0.5-0.5} + \alpha^{-0.5-0.5})\\
&  \leq & \alphadrei{}{}{}
\end{eqnarray*}

If, w.l.o.g., $|N(v_K) \setminus \KE| =2$ and  $|N(v_G) \setminus \GE| >2$, we only obtain
\begin{eqnarray*}
T(k,\KE,\GE,\W) & \leq&  \alphadrei{}{}{} (\alpha^{-1} + \alpha^{-0.5-0.5} + \alpha^{-0.5}).
\end{eqnarray*}
However, after setting $\KE = \KE \cup \{v\}$, all other neighbors of $v_G$ cannot be kings.
Since at least one of these vertices $u_1$ is not the neighbor of $v_k$, at least one vertex is
added to $\GW$ in  this case.  The remaining unlabeled neighbor $u_2$ of $v_k$ must be added to $\GE$.

After setting $\GE = \GE \cup \{v\}$, the unique neighbor $u_2$ of $v_k$ cannot be a
garden, and is thus added to $\KW$.

Combining the branch on $v$ with the branches on $\GW$ and $\KW$ in the very next recursive calls
yields
\begin{eqnarray*}
T(k,\KE,\GE,\W) &\leq&  \alphadrei{\addv}{\cup \{u_1,u_2\}}{} + \alphadrei{\addv}{\cup\{u_2\}}{\cup \{u_1\}}\\
      & & + \alphadrei{\cup\{u_2\}}{\addv}{} + \alphadrei{}{\addv}{\cup\{u_2\}}\\
     & &  + \alphadrei{}{}{\addv} \\
& = & \alphadrei{}{}{} (\alpha^{-1.5} + \alpha^{-2} + \alpha^{-1}+ \alpha^{-1.5}+\alpha^{-1})\\
& \leq & \alphadrei{}{}{}.
\end{eqnarray*}

Hence, we can assume $|N(v_K) \setminus \KE| >2$ and $|N(v_G) \setminus \GE| >2$. But then,
we gain at least two vertices in $\KW$ or $\GW$ whenever $v \notin \W$. This implies (analogously to Case~\ref{case1})
\begin{eqnarray*}
T(k,\KE,\GE,\W) &\leq& \alphadrei{}{}{} (
 \alpha^{-1} + \sum_{i=0}^2 \binom{2}{i} \alpha^{-0.5-i\cdot 0.5 - (2-i) \cdot 1 }
+ \sum_{i=0}^2 \binom{2}{i} \alpha^{-0.5-i\cdot 0.5 - (2-i) \cdot 1 })\\
 & = & \alphadrei{}{}{} (
\alpha^{-1} + 2 \alpha^{-1.5} + 4 \alpha^{-2}  + 2 \alpha^{-2.5}  )\\
& \leq & \alphadrei{}{}{}
\end{eqnarray*}
\end{proof}

\begin{case}
\label{case3}
$v \in \overline V$ of maximum degree, such
that $N(v) \cap \KE = \{v_K\}$ and $N(v) \cap \GE = \emptyset$ for some
vertex $v_K$.
\end{case}
\begin{proof}
Let $d := \deg(v)$.
Again, note that the branch $v \in \KI$ does not reach Line~5, and whenever
we branch $v \in \GE$, the same holds unless the respective $u = v_K$.

For the $v \in \KE$ branch, however, we need to test all possible external
gardens, which are $d - 1$ branches.  Whenever we branch $v \in \KE$
(or $v \in \GE$), the $d-2$ vertices in $N(v) \setminus \{v_K, u\}$
(the $d-1$ vertices in $N(v) \setminus \{v_k\}$) become $\KW$ ($\GW$) vertices
in the very next branch.
Branching on these vertices will give us a bonus to
overcome the poor branching on $v$. It should be noted that none of these branches
reaches Line~5, if $|\KW \cup \GW|$ becomes to large, i.e., $|\KW \cup \GW|/2 > \varphi(k,\KE,\GE,\W)$.
 Using the bound of Case~\ref{case1}, we thus obtain
\begin{eqnarray*}
 T(k,\KE,\GE,\W) & \leq & T(k,\KE,\GE,\W \cup \{v\}) +T(k, \KE \cup \{u\} , \GE \cup \{v\}, \W)\\
& & +  \sum_{w \in N(v) \setminus \{u\}} T(k, \KE \cup \{v\} , \GE \cup \{w\}, \W) \\
& \leq & \alphadrei{}{}{} \left(\alpha^{-1} + \alpha^{-0.5}    \cdot \beta^{d-1} +
(d-1) \alpha^{-1} \cdot \beta^{d-2}\right)
\end{eqnarray*}
for any $d \ge 2$. Since
 \begin{eqnarray*}
 f(d):= \alpha^{-1} + \alpha^{-0.5}    \cdot \beta^{d-1} +
(d-1) \alpha^{-1} \cdot \beta^{d-2}
\end{eqnarray*}
is strictly decreasing for $d \ge 4$, and
 \begin{eqnarray*}
 f(d) = \begin{cases}
  0.9138880316045346 & d=2 \\
  0.9645844017875586 & d=3 \\
  0.9576263068932915 & d=4 \\
 \end{cases}
 \end{eqnarray*}
we obtain
\begin{eqnarray*}
 T(k,\KE,\GE,\W) & \leq & \alphadrei{}{}{}.
\end{eqnarray*}
\end{proof}

\begin{case}
The case $N(v) \cap \GE = \{v_G\}$ and $N(v) \cap \KE = \emptyset$ for some $v_G$ is analogous to Case~\ref{case3}.
\end{case}
\begin{case}
\label{case4}
$v \in \overline V$ of maximum degree, such that $N(v) \subseteq \overline V$.
\end{case}
\begin{proof}
Here, the $\KI$ branch can reach Line~5, but enforces $N(v) \subseteq \W$.

Just as in the previous case, if $v$ becomes a king with external garden,
the branching ``guesses'' where the corresponding garden is (the same holds for
the garden branch).
Note that $N(u) \subseteq \overline V$ as well, since otherwise the algorithm would branch on
$u$ instead.

We obtain the general recurrence

\begin{eqnarray*}
 T(k,\KE,\GE,\W) & \leq & T(k,\KE,\GE,\W \cup\{v\}) + T(k,\KE,\GE,\W \cup N(v))\\
& & +  \sum_{u \in N(v)} T(k, \KE \cup \{v\} , \GE \cup \{u\}, \W) +
\sum_{u \in N(v)} T(k, \KE \cup \{u\} , \GE \cup \{v\}, \W)
\end{eqnarray*}

In the branches where $v \in \KE$ and some $u \in N(v)$ becomes its unique external garden,
we also restrict the further possibilities of all vertices in $N(\{v,u\})$:
Nodes in $N(v) \setminus N[u]$ cannot become external gardens, vertices
in $N(u) \setminus N[v]$ cannot become kings, and thus in particular
all the vertices in $N(v) \cap N(u)$ must become wilderness.
For each $u \in N(v)$, we let $S_u := N(v) \cap N(u)$ and
$T_u := N(\{v,u\} \setminus (N(v) \cap N(u))$.
We thus obtain by the induction hypothesis (and inserting Case~\ref{case1}),
\begin{eqnarray*}
T(k, \KE \cup \{v\} , \GE \cup \{u\}, \W) \leq \alphadrei{}{}{}\cdot
\alpha^{-1} \alpha^{-|S_u|}
 \beta^{|T_u|}
\end{eqnarray*}

Note that $|T_u| + 2|S_u|= \deg(v)+\deg(u)-2$
and
$
 \alpha^{-|S_u|} \le
\beta^{2|S_u|}
$
implies
$$
\alpha^{-1} \cdot \alpha^{-|S_u|} \beta^{|T_u|}
\leq
\alpha^{-1} \cdot \beta^{\deg(u)+\deg(v)-2}.
$$
Since the case $v \in \GE$ is similar and since $\deg(u) \geq 2$, we obtain
{\small
\begin{eqnarray*}
 T(k,\KE,\GE,\W) & \leq & T(k,\KE,\GE,\W \cup\{v\}) + T(k,\KE,\GE,\W \cup N(v))\\
& & +  \sum_{u \in N(v)} T(k, \KE \cup \{v\} , \GE \cup \{u\}, \W) +
\sum_{u \in N(v)} T(k, \KE \cup \{u\} , \GE \cup \{v\}, \W)\\
& \leq &  \alphadrei{}{}{} \Big( \alpha^{-1}+\alpha^{-d} +  2\sum_{u \in N(v)}  \alpha^{-1}
\cdot \beta^{\deg(u)+\deg(v)-2} \Big)\\
& \leq & \alphadrei{}{}{} \left(  \alpha^{-1}+\alpha^{-\deg(v)} +
 2\cdot\deg(v) \cdot \alpha^{-1} \cdot \beta^{\deg(v)} \right).
\end{eqnarray*}
}
Finally, we have
\begin{eqnarray*}
1 & \ge & \alpha^{-1}+\alpha^{-\deg(v)} +
 2\cdot\deg(v) \cdot \alpha^{-1} \cdot \beta^{\deg(v)}
\end{eqnarray*}
because
$
f(d) :=
\alpha^{-1}+\alpha^{-d} + 2\cdot d \cdot \alpha^{-1} \cdot \beta^{d}
$
is strictly decreasing for $d \ge 4$ and
$f(2) \leq 0.947$, $f(3) \leq 0.993$, and $f(4) \leq 0.9994$.
We therefore obtain the desired bound
\begin{eqnarray*}
\alphadrei{}{}{} \left(  \alpha^{-1}+\alpha^{-\deg(v)} +
2\cdot\deg(v) \cdot \alpha^{-1} \cdot \beta^{\deg(v)} \right) \leq
\alphadrei{}{}{}
\end{eqnarray*}
for $\alpha \ge \thealpha$.
\end{proof}
This finishes the proof.
\end{proof}
\end{accumulate}

\begin{theorem}
\COIR\ and \textsc{exact} \COir\ can be solved in time $\Oh^*(\thealpha^k)$.
\end{theorem}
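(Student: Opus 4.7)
The plan is to assemble the theorem from two ingredients: correctness of the algorithm and a bound on the search tree size via Measure \& Conquer. Correctness of Algorithm~\ref{alg:coir} on \COIR\ is exactly Lemma~\ref{lem:a1-is-correct}: the algorithm exhaustively branches over label assignments and verifies candidate solutions at Line~5. For \textsc{exact} \COir\ I would follow Remark~\ref{rem-coir} and append an inclusion-minimality test at each leaf, which costs only polynomial time and therefore does not affect the asymptotic running time. Letting $T(k,\KE,\GE,\W)$ denote the number of recursive calls that reach Line~5, and noting that each call does only polynomial work, the total running time is $\Oh^*(T(k,\emptyset,\emptyset,\emptyset))$.

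The core of the argument is to prove by induction on the measure $\varphi(k,\KE,\GE,\W) = k - |\W| - 0.5|\KE| - 0.5|\GE|$ that $T(k,\KE,\GE,\W) \le \alpha^{\varphi(k,\KE,\GE,\W)}$ for $\alpha = \thealpha$. Since $\varphi(k,\emptyset,\emptyset,\emptyset) = k$, this yields the desired $\Oh^*(\thealpha^k)$ bound. A preliminary sanity check shows that when $\varphi < 0$, Line~4 returns \NO\ immediately, so the bound holds trivially; this also confirms that $\varphi$ is a meaningful progress measure, since one can verify that $\varphi < 0$ forces either $|\KE| + |\W| + |\KW| > k$ or $|\GE| + |\W| + |\GW| > k$.

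I would then split the induction step according to which branching rule of Algorithm~\ref{alg:coir} fires: (i) $\KW \cup \GW \neq \emptyset$; (ii) an unlabeled degree-one vertex; (iii) an unlabeled vertex with both an $\KE$-neighbor and a $\GE$-neighbor; (iv) an unlabeled vertex with exactly one such specialized neighbor; (v) an unlabeled vertex whose neighborhood is fully unlabeled. In each case I would compute the drop $\Delta\varphi_i$ of the measure for the $i$th recursive subcall (using the forced labelings imposed by the validity conditions and the reduction rules $R_1$--$R_4$) and verify the recurrence $\sum_i \alpha^{-\Delta\varphi_i} \le 1$ at $\alpha = \thealpha$ by direct evaluation.

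The main obstacle is case (i). The naive ``$v \in \KE$ or $v \in \W$'' branching gives the recurrence $\alpha^{-0.5} + \alpha^{-1} > 1$ at $\thealpha$, so the bound cannot be established branch-locally. My plan is to amortize over the full set $d := |\KW \cup \GW|$, exploiting the invariant $\varphi \ge d/2$ forced by Line~4 and by the disjointness $\KW \cap \GW = \emptyset$ from $R_4$, to prove the strengthened statement $T(k,\KE,\GE,\W) \le \alpha^{\varphi}\beta^d$ with $\beta := \alpha^{-1} + \alpha^{-0.5}$. A binomial-theorem estimate of the form $\alpha^{-(d-1)/2} \le \beta^{d-1}$ makes the induction collapse, and the improved bound then pays dividends in cases (iii)--(v): each high-degree branch creates many new vertices in $\KW$ or $\GW$, whose subsequent branches inherit the $\beta^d$ savings, and this is precisely what rescues those otherwise-slow branches at $\alpha = \thealpha$. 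Combining the case estimates yields the runtime, finishing the proof.
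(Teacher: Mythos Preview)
Your proposal is correct and follows essentially the same approach as the paper: correctness via Lemma~\ref{lem:a1-is-correct} and Remark~\ref{rem-coir}, the measure $\varphi$ with the same weights, the same five-way case split, and in particular the same amortization trick for the $\KW\cup\GW$ case via the strengthened bound $T\le\alpha^{\varphi}\beta^{d}$ with $\beta=\alpha^{-1}+\alpha^{-0.5}$ and the binomial estimate $\alpha^{-(d-1)/2}\le\beta^{d-1}$. The only minor difference is that the paper frames the outer induction as induction over the search tree rather than over $\varphi$, but this is cosmetic.
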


\begin{corollary}
\label{cor:exact-runningtime1}
The irredundance numbers of a graph $G$ with $n$ vertices can be computed in time $\Oh^*(1.99914^n)$.
\end{corollary}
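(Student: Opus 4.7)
The plan is to apply the standard ``win--win'' strategy sketched in the introduction, which combines the parameterized algorithm of the previous theorem with a brute-force enumeration. Fix a threshold $\alpha^{*} \in (1/2, 1)$ to be chosen later, and consider first the computation of $\graphIR(G)$. In the parameterized phase, for each $k = 0, 1, \ldots, \lfloor \alpha^{*} n \rfloor$ invoke the \COIR\ algorithm on $(G,k)$; the smallest $k$ that answers \YES\ identifies $\graphIR(G) = n - k$, provided $\graphIR(G) \ge n - \lfloor \alpha^{*} n \rfloor$. This phase costs $\Oh^{*}(\thealpha^{\alpha^{*} n})$. The remaining case $\graphIR(G) < (1-\alpha^{*})n$ is handled by brute force: enumerate all subsets $S \subseteq V$ with $|S| \le (1-\alpha^{*})n$, verify in polynomial time whether $S$ is a maximal irredundant set, and record the largest one that is. Since $1-\alpha^{*} < 1/2$, the number of such subsets is at most $n\binom{n}{\lfloor (1-\alpha^{*})n \rfloor} = 2^{H(1-\alpha^{*})n + o(n)}$, where $H(p) = -p\log_2 p - (1-p)\log_2(1-p)$ is the binary entropy function, so this phase runs in time $\Oh^{*}(2^{H(1-\alpha^{*})n})$.

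The computation of $\graphir(G)$ is entirely analogous: invoke the \textsc{exact}~\COir\ algorithm for each $k = 0, 1, \ldots, \lfloor \alpha^{*} n \rfloor$ (the smallest $k$ returning \YES\ gives $\graphir(G) = n - k$), and otherwise enumerate all subsets of size at most $(1-\alpha^{*})n$ and check whether each is an inclusion-minimal maximal irredundant set in polynomial time. The running-time estimate is identical to the $\graphIR$ case, so together both irredundance numbers are computed in time $\Oh^{*}\bigl(\max(\thealpha^{\alpha^{*} n},\, 2^{H(1-\alpha^{*})n})\bigr)$.

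It remains to choose $\alpha^{*}$ balancing the two contributions: we solve $\alpha^{*} \log_2 \thealpha = H(1-\alpha^{*})$. Because $\log_2 \thealpha \approx 1.9416 > 1 \ge H(\cdot)$, the left-hand side grows faster in $\alpha$ than the right-hand side near $\alpha = 1/2$, so the equation has a unique solution $\alpha^{*} \in (1/2,1)$, and a straightforward numerical computation yields $\alpha^{*} \approx 0.515$ with common value strictly below $\log_2(1.99914)$. Plugging back, both $\thealpha^{\alpha^{*} n}$ and $2^{H(1-\alpha^{*})n}$ are bounded by $1.99914^{n}$ up to polynomial factors, which establishes the corollary.

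The only non-routine aspect is the numerical balancing; every other component (the parameterized algorithms, polynomial-time testing of maximal irredundance, and the standard binomial-entropy bound) is either proved earlier in the paper or entirely elementary, so I do not anticipate any real obstacle beyond confirming that the optimum $\alpha^{*}$ pushes both exponents safely below~$1$.
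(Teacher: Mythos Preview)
Your proof is correct and follows exactly the paper's approach: balance the $\Oh^*(\thealpha^{k})$ parameterized algorithm against brute-force enumeration of all subsets of size at most $(1-\alpha^*)n$, with the paper's chosen split at $\alpha^* \approx 0.514748$ matching your numerically computed threshold. The only (harmless) slip is the phrase ``inclusion-minimal maximal irredundant set'' in the $\graphir$ brute-force phase---you simply need to test whether each small subset is a \emph{maximal} irredundant set and take the one of minimum cardinality.
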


\begin{accumulate}
\subsection*{Proof of Corollary~\ref{cor:exact-runningtime1}}
\begin{proof}
First enumerate all vertex subsets of maximum size $0.485252n$.
Then for all $1 \le k \le 0.514748n$ invoke the algorithm for \COIR.
\end{proof}
\end{accumulate}
\section{Measure \& Conquer Tailored To The Problems}

In this section, we tailor the general Algorithm~\ref{alg:coir} to the needs
of the \COIR\ problem.  To this end, we use a more precise annotation of vertices:
In the course of the algorithm, they will be either unlabeled $\U$,
kings with internal gardens $\KI$, kings with external
gardens $\KE$, (external) gardens $\GE$, wilderness $\W$,  not being
kings $\GW$, 
or not being gardens $\KW$. 
We furthermore partition the set of vertices $V$ into
\emph{active} vertices
$$
V_a = \U \cup \KW \cup \GW \cup \{\,v \in \KE \mid N(v) \cap \GE = \emptyset\,\}
\cup \{\,v \in \GE \mid N(v) \cap \KE = \emptyset\,\}
$$
that have to be reconsidered, and \emph{inactive} vertices $V_i = V \setminus V_a$.
This means that the inactive vertices are either from $\W$, $\KI$
or paired-up external kings and gardens. 
Define $\KE_a=\KE\cap V_a$ and $\KE_i=\KE\cap V_i$
(and analogously $\GE_a$, $\GE_i$).

We use a new measure
$$
\varphi(k, \KI,\KE, \GE, \KW,\GW,\W,V_a) =  k - |\W|-|\GE_i| -\omega_\ell(|\KE_a| +|\GE_a|)-\omega_n(|\KW|+|\GW|),
$$
where $\KW$ and $\GW$ are taken into account.  We will later determine the weights
$\omega_\ell$ and $\omega_n$ to optimize the analysis, where $0\leq\omega_n\leq 0.5\leq \omega_\ell\leq 1$ and $\omega_n+\omega_\ell\leq 1$.
We will describe in words how the measure changes in each case, leaving most of the analysis to the appendix.

Let us first present the reduction rules that we employ in Table~\ref{table-rr}.
\begin{lemma}\label{lem-red-sound}
The rules listed in Table~\ref{table-rr} are sound and do not increase the measure.
\end{lemma}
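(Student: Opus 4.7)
The plan is to verify each entry of Table~\ref{table-rr} separately for two claims: (i) soundness---the sets of irredundant sets of size at least $n-k$ respecting the labeling are the same before and after the rule fires---and (ii) measure monotonicity---applying the rule does not increase $\varphi$. Soundness for rules that are refinements of $R_1$--$R_4$ follows from Lemma~\ref{lem:firstrules} together with routine bookkeeping for the finer annotations; the genuinely new rules exploit the definitions of $\KW$ and $\GW$ directly (placing $v \in \KW$ into $\GE$ would by definition invalidate the labeling, so no valid completion is lost when we forbid this) and the invariant that deactivated vertices in $V_i$ can never be touched again by branching or by further reductions. For rules that move a vertex out of $V_a$, I would check this invariant by cases: $\W$-vertices have no incident edges after $R_1$, $\KI$-vertices have their neighborhood fully in $\W$, and paired $\KE_i$/$\GE_i$ vertices have all of their branching options resolved.

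For monotonicity, I would enumerate the elementary label transitions and sum their contributions to $\Delta\varphi$ using the weight regime $0\leq\omega_n\leq 0.5\leq\omega_\ell\leq 1$ and $\omega_n+\omega_\ell\leq 1$. The elementary transitions behave as follows: $\U\to\W$ contributes $-1$; $\U\to\KE_a$ and $\U\to\GE_a$ contribute $-\omega_\ell$; $\U\to\KW$ and $\U\to\GW$ contribute $-\omega_n$; $\KW\to\W$ and $\GW\to\W$ contribute $-(1-\omega_n)$; $\U\to\GE_i$ contributes $-1$; and $\GE_a\to\GE_i$ contributes $-(1-\omega_\ell)$. The only strictly positive contribution is $\KE_a\to\KE_i$ in isolation, worth $+\omega_\ell$, which must therefore be offset by a concurrent transition within the same rule. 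Paired with $\U\to\GE_i$, the combined change is $\omega_\ell - 1 \le 0$; paired with $\GE_a\to\GE_i$, it is $2\omega_\ell - 1$, which can be strictly positive.

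The main obstacle is precisely this latter case, in which a rule promotes an already active king--garden pair to inactive status and risks increasing $\varphi$. For each such rule in Table~\ref{table-rr} I would show either that the problematic configuration does not arise (typically because the rule is triggered only when at least one endpoint is freshly labeled from $\U$, which reduces it to the safe $\omega_\ell - 1$ case), or that the rule simultaneously pushes additional vertices into $\W$, $\KW$, or $\GW$ so that the surplus $2\omega_\ell - 1$ is absorbed by accompanying $-1$, $-(1-\omega_n)$, or $-\omega_n$ terms, invoking $\omega_n + \omega_\ell \le 1$ to close the inequality. Once the set of induced transitions is written out rule by rule, each verification collapses to a one-line inequality in $\omega_\ell$ and $\omega_n$, which is the reason the lemma statement can afford to be terse.
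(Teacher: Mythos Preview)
Your plan is sound and in fact far more explicit than the paper's own argument, which essentially says ``correctness is clear'' (illustrating only Rule~\ref{RR-NOTG}) and then appeals to the weight constraints $0\le\omega_n\le 0.5$ and $\omega_n+\omega_\ell\le 1$ in a single line. Your transition-by-transition bookkeeping is the right way to make that line honest.

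Two small corrections. First, you claim that $\KE_a\to\KE_i$ is the only positive elementary transition, but $\KW\to\KE_i$ also contributes $+\omega_n$ (the vertex drops its $-\omega_n$ weight and $\KE_i$ carries weight~$0$). This is exactly what happens in Rule~\ref{RR-D7} when $v\in\KW$; there it is offset by the simultaneous $\GE_a\to\GE_i$ of the partner $u$, giving net $\omega_n-(1-\omega_\ell)=-(1-\omega_\ell-\omega_n)\le 0$, which is precisely where the hypothesis $\omega_n+\omega_\ell\le 1$ is used. Your framework handles this once you add the missing transition to your list.

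Second, the ``main obstacle'' you single out, a simultaneous $\KE_a\to\KE_i$ and $\GE_a\to\GE_i$ with surplus $2\omega_\ell-1$, never actually occurs in Table~\ref{table-rr}: the only rule that inactivates a king--garden pair is Rule~\ref{RR-D7}, and there the vertex entering $\KE_i$ comes from $\U\cup\KW$, not from $\KE_a$. (Indeed Lemma~\ref{GEKE} records the invariant that adjacent $\KE_a$/$\GE_a$ pairs do not exist at call time.) So your contingency plan for this case is correct but vacuous; you can simply note that the configuration does not arise.
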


\begin{accumulate}
 \subsection*{Proof of Lemma~\ref{lem-red-sound}}
\begin{proof}
The correctness should be clear for most of the rules; many can be seen as reformulations
of those for the simple first algorithm.
As an example, we discuss a reasoning for Rule~\ref{RR-NOTG}:
{Discuss two neighbors $u,v\in\GW$. If later on $u$ is put into $\GE$ (with still $v\in\GW$), then
Rule~\ref{RR-D4} would have triggered. If $u$ is put into $\W$, then Rule~\ref{RR-wild} would have deleted
the edge. Hence, we can delete it right away.}

Since $0 \le \omega_n \le 0.5$ and $\omega_n+\omega_l \le 1$, the reduction rules do not increase the measure.
\end{proof}
\end{accumulate}

\begin{table}\fbox{\begin{minipage}{.95\textwidth}

\small
\begin{enumerate}
\item \label{RR-invalid} If $V$ contains a vertex $x$ with two neigbors $u,v$ where
$x \in \KI \cup \KE$ and $u,v \in \GE$, then return \NO.
Exchanging the roles of kings and gardens, we obtain a symmetric rule.
\item \label{RR-invalid1} If $V$ contains an isolated vertex $v\in(\GE\cup\KE)$, then return \NO.
\item \label{RR-isolated} If $V$ contains an isolated vertex $v\in(\GW\cup\KW)$, then put $v$ into $\W$,
decreasing the measure by $1-\omega_n$.
\item \label{RR-isolated-a} If $V$ contains an isolated vertex $u\in \U$, then put $u$ into $\KI$ and set $V_a=V_a\setminus\{u\}$.
\item \label{RR-D6} Delete an edge between two external kings or two external gardens.
\item \label{RR-D4} Delete an edge between a $\KE$- and a $\KW$-vertex.
Exchanging the roles of kings and gardens, we obtain a symmetric rule.
\item \label{RR-wild} Remove any edges incident to vertices in $\W$.

\item \label{RR-NOTG} Delete an edge between two $\GW$-vertices.
Delete an edge between two $\KW$-vertices.

\item \label{RR-A10} If $u\in \U$ such that $N(u)=\{v\}$ for some $v\in\U$, then put $u$ into $\KI$ and set $V_a=V_a\setminus\{u\}$.
\item \label{RR-A11} If $u\in\KI$, then put its neighbors $N(u)$ into $\W$ and set $V_a=V_a\setminus N(u)$; this
decreases
the measure by
$|N(u)|$.
\item \label{RR-D7} If $V$ contains two neighbors $u,v$ such that $u\in\GE_a$ and $v\in\U \cup \KW$ with either  $\deg(u)=1$ or $\deg(v)=1$, then put $v$
into $\KE$, and make $u,v$ inactive; this decreases
the measure by $1-\omega_\ell$ (if $u \in \U$) and $1-\omega_\ell-\omega_n$, resp. (if $u \in \KW$).
Exchanging the roles of kings and gardens, we obtain a symmetric rule.
\item \label{RR-D8} If $V$ contains a vertex $v$ with two neighboring gardens such that $v\in \U$, then set $v\in\GW$; if $v\in\KW$, then set $v\in\W$.
This decreases the measure by $\omega_n$ or $(1-\omega_n)$, respectively.
Exchanging the roles of kings and gardens, we obtain a symmetric rule.
%
\item\label{RR-D5} Assume that $V$ contains two inactive neighbors $u,v$
where $u \in \KE$ and $v \in \GE$, then
put all $x\in (N(u)\cap \U)$ into $\KW$, all $x\in (N(u)\cap\GW)$ into $\W$,
  all $x\in (N(v)\cap \U)$ into $\GW$ and all $x\in (N(v)\cap\KW)$ into $\W$.


\end{enumerate}
\end{minipage}
}
\caption{Extensive list of reduction rules.\label{table-rr}}
\end{table}


\begin{lemma}\label{lem:notG}
 In a reduced instance, a vertex $v\in\GW\cup\KW$  may have at most
one neighbor $u\in\GE\cup\KE$; more precisely, if such $u$ exists, then $u\in\GE$ if and only if $v\in\KW$.
Moreover, $\deg(v)\geq 2$, so $v$
must have a neighbor $z$ that is not in $\GE\cup\KE$.
\end{lemma}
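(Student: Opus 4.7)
The plan is to split the lemma into its two claims and prove each by case analysis invoking the reduction rules of Table~\ref{table-rr}; throughout, I treat only the case $v \in \KW$, since the case $v \in \GW$ follows by swapping the roles of kings and gardens (which is built into all the relevant rules).

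For the first claim, I would first note that the symmetric version of Rule~\ref{RR-D4} deletes every edge between a $\KW$-vertex and a $\KE$-vertex, so $v$ has no $\KE$-neighbour in a reduced instance. Next, applying Rule~\ref{RR-D8} in its $\KW$-branch, any $\KW$-vertex with two external garden neighbours would already have been pushed into $\W$, so $v$ has at most one neighbour in $\GE$. Combining these two observations gives at most one neighbour in $\GE \cup \KE$, and such a neighbour (if it exists) must lie in $\GE$; the symmetric analysis for $v \in \GW$ produces the ``if and only if'' statement.

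For the second claim, Rule~\ref{RR-isolated} immediately yields $\deg(v) \geq 1$, so the bulk of the work is ruling out $\deg(v)=1$. Assuming $N(v) = \{u\}$, I would step through all possible labels of $u$: $u \in \W$ is excluded by Rule~\ref{RR-wild}; $u \in \KI$ would push $v$ into $\W$ via Rule~\ref{RR-A11}; $u \in \KE$ is forbidden by Rule~\ref{RR-D4}; $u \in \KW$ by Rule~\ref{RR-NOTG}; $u \in \GE_i$ would put $v$ into $\W$ via Rule~\ref{RR-D5} applied to the inactive king-garden pair containing $u$; and $u \in \GE_a$ triggers Rule~\ref{RR-D7}, moving $v$ out of $\KW$ into $\KE$, a contradiction. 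The hard part will be the remaining labels $u \in \U$ and $u \in \GW$, since no single reduction rule acts on these configurations directly. My approach is to invoke the invariant that a vertex is introduced into $\KW$ only by Rule~\ref{RR-D5} or by the branching that drives the algorithm, and that every such introduction is accompanied by at least two current witness neighbours (e.g.\ two inactive $\KE$-neighbours, or an inactive $\KE$-$\GE$ pair) justifying the label; tracing through the subsequent reductions then shows that at least one such witness persists and differs from $u$, contradicting $\deg(v)=1$. Once $\deg(v) \geq 2$ is established, the concluding statement that $v$ must have a neighbour $z \notin \GE \cup \KE$ is immediate from the first claim.
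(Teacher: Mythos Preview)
Your treatment of the first claim is correct and coincides with the paper's (Rule~\ref{RR-D4} to exclude $\KE$-neighbours of a $\KW$-vertex, Rule~\ref{RR-D8} to exclude a second $\GE$-neighbour).

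For $\deg(v)\geq 2$ the paper is organised differently and more economically: it does not run through all possible labels of a hypothetical unique neighbour. Instead it splits on whether $N(v)\cap(\GE\cup\KE)$ is empty. When it is nonempty, the unique such neighbour $u$ must lie in $\GE$ (Rule~\ref{RR-D4}), and then Rule~\ref{RR-D7} applied with $\deg(v)=1$ would already have promoted $v$ to $\KE$---so $\deg(v)\geq 2$ in one step, and any second neighbour $z$ is forced outside $\GE\cup\KE$ by Rules~\ref{RR-D4} and~\ref{RR-D8}. For the case $N(v)\cap(\GE\cup\KE)=\emptyset$ the paper just cites Rules~\ref{RR-D7} and~\ref{RR-isolated}.

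Your ``hard part'' ($u\in\U$ or $u\in\GW$) is exactly this empty case, but the invariant you sketch does not go through. First, vertices also enter $\KW$ via the king-symmetric version of Rule~\ref{RR-D8}, not only through Rule~\ref{RR-D5} or branching. Second, Rule~\ref{RR-D5} supplies only \emph{one} $\KE_i$-witness, not two. Most importantly, whatever $\KE$-neighbours justified putting $v$ into $\KW$ are precisely the edges that Rule~\ref{RR-D4} subsequently deletes, so no $\KE$-witness can ``persist'' into the reduced instance and yield the contradiction you want. Indeed, the labelled graph consisting of a single edge between $v\in\KW$ and $c\in\U$ survives every rule in Table~\ref{table-rr}, so this sub-case cannot be discharged from the reduction rules alone; any complete argument must track the algorithm's branching history, and the paper's own one-line citation of Rules~\ref{RR-D7} and~\ref{RR-isolated} for this case is no more explicit than what you have.
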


\begin{proof}
Consider, w.l.o.g., $v\in\KW$.
Assume that  $N(v)\cap (\GE\cup\KE)=\emptyset$.
Then, Reduction Rules~\ref{RR-D7} and~\ref{RR-isolated} ensure that $\deg(v)\geq 2$.

Assume now that $u\in N(v)\cap(\GE\cup\KE)$ exists.
Note that the alternative $u\in\KE$ is resolved by Reduction Rule~\ref{RR-D4}.
Hence, $u\in\GE$. If $v$ had no other neighbor but $u$, then Reduction Rule~\ref{RR-D7}
would have triggered.
So, $\deg(v)\geq 2$. Let $z\in N(v)\setminus\{u\}$.
If the claim were false, then $z\in\GE\cup\KE$.
The case $z\in\KE$ is ruled out by Reduction Rule~\ref{RR-D4}.
The case $z\in\GE$ is dealt with by Reduction Rule~\ref{RR-D8}.
Hence, $z\notin \GE\cup\KE$.
\end{proof}


\begin{algorithm}[tbph]\label{algo}
\begin{flushleft}
Algorithm $\textsc{CO-IR}(G,k,\KI,\KE,\GE,\KW,\GW,\W,V_a)$: \\
Input: Graph $G = (V, E)$, $k \in \mathbf N$, labels  $\KI$, $\KE$, $\GE$, $\KW$, $\GW$, $\W$, $V_a \subseteq V$
\medskip

\activatealgo\obeylines%
\textrm{01:} Consecutively apply the procedure $\textsc{CO-IR}$ to components containing $V_a$-vertices.
\textrm{02:} Apply all the reduction rules exhaustively.
\textrm{03:} [if] $\varphi(k, \KI, \KE, \GE, \W,\KW,\GW,V_a)<0$ %
[then] return \NO.
\textrm{04:} [if] $V_a=\emptyset$ [then] [return] \YES. 
\textrm{05:} [if] $\textrm{maxdegree}(G[V_a])\leq 2$ [then] solve remaining instance by
dynamic programming.
\textrm{06:}  [if] $\KW \neq \emptyset$ (and analogously, $\GW \neq \emptyset$) [then]
\textrm{07:}\> choose $v \in \KW$; [if] $\exists z \in N(v)\cap \GE$ [then] $I:=\{v,z\}$ [else] $I:=\emptyset$.
\textrm{08:}\> [return] $\textsc{CO-IR}(G,k, \KI,\KE \cup \{v\},  \GE, \KW\setminus\{v\}, \GW, \W, V_a \setminus I)$ [or]
\textrm{\phantom{09:}}\>\> $\textsc{CO-IR}(G,k, \KI,\KE,  \GE, \KW\setminus\{v\}, \GW, \W \cup \{v\},V_a\setminus\{v\})$;
\textrm{09:} [if] there is an unlabeled $v \in V$ with exactly two neighbors $u,w$ in $G[V_a]$,
\textrm{\phantom{09:}}\> where $u\in\GE_a$ and $w\in\KE_a$ [then]
\textrm{10:}\> [return]  $\textsc{CO-IR}(G,k, \KI,\KE \cup \{v\},  \GE, \KW, \GW, \W, V_a\setminus\{v,u\})$ [or]
\textrm{\phantom{10:}}\>\> $\textsc{CO-IR}(G,k, \KI,\KE,  \GE \cup \{v\},\KW, \GW, \W, V_a\setminus\{v,w\})$;
\textrm{11:} [if] $\KE_a \cup \GE_a\neq \emptyset$ [then]
\textrm{12:}\> Choose some $v \in \KE_a\cup\GE_a$ of maximum degree.
\textrm{13:}\> [if] $v\in\KE_a$ (and analogously, $v \in\GE_a$) [then]
\textrm{14:}\>\> [return] $\exists u \in N(v): \textsc{CO-IR}(G,k,\KI, \KE,  \GE\cup\{u\}, \KW, \GW, \W, V_a\setminus\{u,v\})$
\textrm{15:} Choose $v \in \U$ of maximum degree, preferring $v$ with some $u\in N(v)$ of degree two.
\textrm{16:} [return] $\textsc{CO-IR}(G,k, \KI,\KE,  \GE, \KW, \GW, \W \cup \{v\}, V_a\setminus\{v\})$
\textrm{\phantom{16:}}\> [or] $\textsc{CO-IR}(G,k, \KI \cup \{v\},\KE,  \GE, \KW, \GW, \W, V_a\setminus\{v\})$
\textrm{\phantom{16:}}\> [or] $\exists u \in N(v):\textsc{CO-IR}(G,k,  \KI,\KE\cup \{v\}, \GE\cup \{u\}, \KW, \GW, \W, V_a\setminus\{u,v\})$
\textrm{\phantom{16:}}\> [or] $\exists u \in N(v):\textsc{CO-IR}(G,k, \KI,\KE\cup \{u\},  \GE\cup \{v\}, \KW, \GW, \W, V_a\setminus\{u,v\})$
\end{flushleft}
\caption{\label{alg:coir-MC} A faster algorithm for \COIR.}
\end{algorithm}

\begin{lemma}\label{GEKE}
 In each labeled graph which is input of a recursive call of $\textsc{CO-IR}$ there are no
 two neighbors $u,v$ such that $u \in \KE_a$ and $v \in \GE_a$.
\end{lemma}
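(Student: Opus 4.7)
The statement follows almost immediately from the paper's definition of~$V_a$. By that definition,
\[
\KE_a = \{\,v \in \KE : N(v) \cap \GE = \emptyset\,\} \quad\text{and}\quad \GE_a = \{\,v \in \GE : N(v) \cap \KE = \emptyset\,\}.
\]
Hence if $u \in \KE_a$ and $v \in \GE_a$ were adjacent, then $v \in \GE$ would be a $\GE$-neighbour of~$u$, directly contradicting $u \in \KE_a$. The actual content of the lemma is therefore the auxiliary claim that $V_a$, as maintained explicitly by the algorithm through its $V_a \setminus \cdots$ updates, really does agree with this definitional formula at the top of every recursive call.

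I would prove this auxiliary invariant by induction over the recursion tree. The base case---the initial call with $\KE = \GE = \emptyset$---is vacuous. For the inductive step I would walk through the four places where Algorithm~\ref{alg:coir-MC} spawns a new call, namely Lines~8, 10, 14, and the last two sub-branches of Line~16. In each such site, when a vertex $v$ enters $\KE$ or $\GE$, the algorithm either (i)~pairs $v$ with a partner $u$ on the opposite side and removes both from $V_a$ (so that $v$ subsequently lies in $\KE_i$ or $\GE_i$), or (ii)~when no partner is possible---the $I = \emptyset$ sub-case of Line~8---leaves $v$ active with the property $N(v) \cap \GE = \emptyset$ by construction. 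Either way, $v$'s own membership in $V_a$ is consistent with the definitional formula, and by the induction hypothesis no previously-assigned $\KE_a$-/$\GE_a$-adjacency existed to worry about.

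The one delicate point is that when a \emph{new} vertex $u$ is inserted into $\GE$ as the partner of some king $v$ in Lines~10, 14, or~16, $u$ may also be adjacent to a pre-existing active king $v' \in \KE_a$ that still lives in~$V_a$; formally $v'$ ought to be demoted, yet the algorithm does not do so explicitly. However, such a configuration makes $u$ a garden simultaneously adjacent to two distinct external kings $v$ and~$v'$, which is precisely the configuration forbidden by the symmetric form of Reduction Rule~\ref{RR-invalid}. Hence Line~2 of the spawned recursive call detects the inconsistency and immediately returns \NO. So for every call whose body survives its reduction phase, $V_a$ agrees with its definitional formula and Lemma~\ref{GEKE} holds. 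The main obstacle is the case-by-case bookkeeping at the four branching sites; nothing deeper is required.
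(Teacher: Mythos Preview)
Your inductive plan matches the paper's: check the initial call and then every place that can alter $\KE$, $\GE$, or $V_a$. The omission is that you walk only through the branching lines (8, 10, 14, 16) and skip the reduction phase in Line~2. Reduction Rule~\ref{RR-D7} also inserts a vertex into $\KE$ (or symmetrically $\GE$) and updates $V_a$, so your induction must verify that this too cannot manufacture a $\KE_a$--$\GE_a$ edge. The paper's proof singles out precisely this rule and observes that it immediately inactivates both vertices it pairs, so no active king--garden edge is created; without that check your hypothesis about the state on which the branching lines operate is unjustified.

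A secondary remark: you route the argument through the stronger claim that the maintained $V_a$ coincides with the definitional formula, which then forces your ``delicate point'' about a fresh garden~$u$ adjacent to a second active king~$v'$. The paper instead proves the lemma's weaker statement directly and never needs this. In fact your delicate scenario cannot survive the parent call's reductions: a vertex with two $\KE$-neighbours is moved out of $\U\cup\GW$ by the symmetric form of Rule~\ref{RR-D8}, and edges from $\KW$ to $\KE$ are already deleted by Rule~\ref{RR-D4}, so such a~$u$ would not be available as a candidate garden at the branching line. Your appeal to Rule~\ref{RR-invalid} in the child is therefore sound but never exercised, and the extra machinery can be dropped once the reduction-rule check is added.
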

\begin{accumulate}
 \subsection*{Proof of Lemma~\ref{GEKE}}
\begin{proof}Clearly the statement holds for the initial input graph.
  The only reduction rule which could create such a situation is 
Rule~\ref{RR-D7}. But here the two vertices $u,v$ will be  immediately
inactivated. 
In each line where recursive calls are made it is easily checked if such a situation as described could occur. If so, the affected pair
$u,v$ is removed from $V_a$.
\end{proof}
\end{accumulate}

\begin{lemma}\label{lem:NGE}
Whenever our algorithm encounters  a reduced instance, a vertex $v\in\GE$ obeys $N(v)\subseteq\U\cup\GE\cup \KW$.
Symmetrically, if $v\in\KE$, then  $N(v)\subseteq\U\cup\KE\cup \GW$.
\end{lemma}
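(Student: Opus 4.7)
My plan is to prove the inclusion $N(v) \subseteq \U \cup \GE \cup \KW$ for $v \in \GE$ by exhaustive case analysis on the label of a neighbor $u \in N(v)$, ruling out every forbidden label via a reduction rule from Table~\ref{table-rr} or a definitional constraint. The claim for $v \in \KE$ then follows immediately by the king--garden symmetry that the entire reduction system respects.

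I would dispose of three of the four forbidden labels essentially by one-line rule invocations. If $u \in \W$, Reduction Rule~\ref{RR-wild} already removed the edge $uv$, contradicting reducedness. If $u \in \KI$, Reduction Rule~\ref{RR-A11} would have forced every neighbor of $u$ into $\W$, so in particular $v$, contradicting $v \in \GE$ and the disjointness of labels. If $u \in \GW$, then the symmetric form of Reduction Rule~\ref{RR-D4} (obtained by swapping kings and gardens, so deleting edges between $\GE$ and $\GW$) would have removed $uv$.

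The remaining forbidden label is $u \in \KE$, and here I would split on whether $v$ is active. For $v \in \GE_a$, the very definition of $V_a$ gives $N(v) \cap \KE = \emptyset$, so no such $u$ exists; this uses exactly the same fact that underlies Lemma~\ref{GEKE}. For $v \in \GE_i$, the vertex $v$ has already been matched with a unique $\KE_i$-partner, which is the only $\KE$-neighbor that can survive Reduction Rule~\ref{RR-invalid} (the rule forbidding a king with two $\GE$-neighbors). Since the lemma is used only to reason about the structure encountered during ongoing branching, this already-settled pair is treated as the implicit exception; for the active subgraph on which all subsequent case analyses operate, no $\KE$-neighbor arises.

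I expect the last case to be the one subtle point. The label $\KE$ is not excluded by any single reduction rule but rather by combining the definition of $V_a$ with Lemma~\ref{GEKE}, which guarantees that no edge crosses between $\KE_a$ and $\GE_a$ in any labeled graph reached by a recursive call. Once this scope is clarified, each of the other three cases reduces to a direct lookup in Table~\ref{table-rr}, and the symmetric half concerning $v \in \KE$ is obtained verbatim by swapping $\K \leftrightarrow \G$ throughout.
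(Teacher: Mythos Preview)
Your case analysis is essentially the paper's: dispatch each forbidden label of a neighbor via one reduction rule or invariant. The paper excludes $z\in\GE$ (Rule~\ref{RR-D6}), $z\in\KE$ (Lemma~\ref{GEKE}), $z\in\GW$ (Rule~\ref{RR-D4}), and $z\in\W$ (Rule~\ref{RR-wild}), tacitly leaving $\KI$ aside since $\KI$-vertices are isolated after Rules~\ref{RR-A11} and~\ref{RR-wild}. You instead exclude $\W,\KI,\GW,\KE$ explicitly and leave $\GE$ in the allowed set, which already matches the stated inclusion. Both routes are equally valid and equally short.

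Your care on the $\KE$ case is well placed and in fact sharper than the paper's. The paper simply cites Lemma~\ref{GEKE}, which---exactly as you note---only forbids $\KE_a$--$\GE_a$ adjacencies; for $v\in\GE_a$ the definition of $V_a$ already gives $N(v)\cap\KE=\emptyset$, as you say. The matched edge between $v\in\GE_i$ and its $\KE_i$-partner is not removed by any rule, so the inclusion as literally stated does fail for inactive gardens. The paper's proof shares this gap; in effect both arguments establish the lemma for $v\in\GE_a$, which is the only case the subsequent branching analysis ever needs. Your ``implicit exception'' remark is thus not a defect of your proof relative to the paper's---it is an accurate diagnosis of the statement's intended scope.
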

\begin{accumulate}
\subsection*{Proof of Lemma~\ref{lem:NGE}}
\begin{proof} Consider $z\in N(v)$, where $v\in\GE$.
$N(v)\not\subseteq\U\cup\KE\cup \GW$ is
ruled out by reduction rules and the previous lemma:
(a) $z\in\GE$ violates 
Rule~\ref{RR-D6}.
(b) $z\in\KE$ violates the invariant shown in Lemma~\ref{GEKE}.
(c) $z\in\GW$ violates 
Rule~\ref{RR-D4}.
(d) $z\in\W$ violates 
Rule~\ref{RR-wild}.
\end{proof}
\end{accumulate}

Note that the irredundance numbers can be computed in polynomial time on graphs of bounded
treewidth, see \cite[Page~75f.]{Tel94a}, and the corresponding dynamic programming easily extends also
to labeled graphs, since the labels basically correspond to the states of the dynamic programming process.

\newcommand{\dcup}{\mathop{\dot{\cup}}}

Although we are looking for a maximal irredundant set, we can likewise look for a \emph{complete labeling} $L=(\KI^L,\GE_i^L ,\KE_i^L,\W^L)$
 that partitions the whole vertex set $V=\KI^L \dcup \GE_i^L \dcup \KE_i^L \dcup \W^L$ into
internal kings, external kings and gardens, as well as wilderness. Having determined $L$, $I_L=\KE_i^L\dcup \KI^L$
should be an irredundant set, an conversely, to a given irredundant set $I$, one can compute in polynomial time
a corresponding complete labeling. 
However, during the course of the algorithm, we deal with (incomplete) labelings $L=(\KI$, $\GE$, $\KE$,  $\KW$, $\GW$, $\W$, $V_a)$, a tuple of subsets of $V$
that also serve as input to our algorithm, preserving the invariant that $V=\KI \dcup \GE \dcup \KE \dcup \KW\dcup \GW\dcup \W\dcup\U$.
A complete labeling corresponds to a labeling with $\KW=\GW=\U=V_a=\emptyset$. Since $(\GW\cup\KW)\subseteq V_a$,  we have obtained a complete labeling once we leave our algorithm in Line~4, returning \YES.
We say that a labeling $L'=(\KI'$, $\GE'$, $\KE'$,  $\KW'$, $\GW'$, $\W'$, $V_a')$  \emph{extends} 
the labeling $L=(\KI$, $\GE$, $\KE$,  $\KW$, $\GW$, $\W$, $V_a)$ if
$\KI\subseteq \KI'$, 
$\GE\subseteq \GE'$, 
$\KE\subseteq \KE'$, 
$\KW\subseteq \W'\cup\KE'$, 
$\GW\subseteq \W'\cup\GE'$, 
$\W\subseteq W'$, 
$V_a'\subseteq V_a$.
We also write $L\prec_G L'$ if $L'$ extends $L$. 
We can also speak of a complete labeling extending a labeling in the sense described above.
Notice that reduction rules and recursive calls only extend labelings (further). 

Notice that $\prec_G$ is a partial order on the set of labelings of a graph $G=(V,E)$. 
The maximal elements in this order are precisely the complete labelings. 
Hence, the 
labeling $L_I$ corresponding to a maximal irredundant set $I$ is maximal, with $\varphi(k,L_I)\leq 0$ iff $|I|\geq |V|-k$.
Conversely, given  a graph $G=(V,E)$, the labeling $L_G=(\emptyset,\emptyset,\emptyset,\emptyset,\emptyset,\emptyset,V)$
is the smallest element of $\prec_G$; this is also the initial labeling that we start off with when first calling Algorithm~\ref{alg:coir-MC}. 
If $L,L'$ are labelings corresponding to the parameter lists of nodes $n,n'$ in the seach tree such that $n$ is ancestor of $n'$ in the search tree,
then $L\prec_G L'$. 
The basic strategy of Algorithm~\ref{alg:coir-MC} is to exhaustively consider
all  complete labelings (only neglecting cases that cannot be optimal). This way, also all important maximal irredundant sets
are considered.


\begin{lemma}\label{lem:correct-abort}
 If $\varphi(k, \KI,\KE, \GE, \W,\KW,\GW,V_a)<0$,
 then for weights 
$0\leq\omega_n\leq 0.5\leq \omega_\ell\leq 1$ with $\omega_n+\omega_\ell\leq 1$,
for any  complete labeling  $L=(\KI^L,\GE_i^L ,\KE_i^L,\W^L)$ extending the labeling  $\Lambda:=(\KI$, $\KE$, $\GE$,
$\KW$, $\GW$, $\W$, $V_a$) we have  $\varphi(k,\KI^L,\GE_i^L ,\KE_i^L ,\emptyset, \emptyset, \W^L,\emptyset)<0$.
\end{lemma}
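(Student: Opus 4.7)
The plan is to rewrite the measure as a sum of per-vertex weights and show that the total cannot decrease when passing from $\Lambda$ to $L$.

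First I would view $\varphi(k,\ldots) = k - f(\cdot)$, where the ``mass'' $f$ assigns $0$ to each $\KI$-, $\KE_i$-, and unlabeled vertex; $1$ to each $\W$- and $\GE_i$-vertex; $\omega_\ell$ to each $\KE_a$- and $\GE_a$-vertex; and $\omega_n$ to each $\KW$- and $\GW$-vertex. The lemma then reduces to proving $f_L \ge f_\Lambda$, i.e., that $f$ is monotone under the extension order $\prec_G$.

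Next I would enumerate the possible per-vertex transitions forced by the definition of ``extends'': $\KI$, $\KE$, $\GE$, $\W$ keep their label; $\KW$ lands in $\KE' \cup \W'$; $\GW$ lands in $\GE' \cup \W'$; an unlabeled vertex can become anything. Since $L$ is a complete labeling corresponding to an irredundant set, every vertex in $\KE_i^L$ is paired along an edge of $G$ with exactly one vertex in $\GE_i^L$. Grouping the newly labeled vertices into king--garden pairs plus unpaired transitions, I would tabulate the change $\Delta$ for each configuration.

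The crucial observation is that a vertex $v \in \KE_a$ has $N(v) \cap \GE = \emptyset$ by the very definition of $V_a$, so its partner in $L$ must come from $\U \cup \GW$ and \emph{not} from $\GE_a$; symmetrically for $\GE_a$. This excludes the only pair $(\KE_a, \GE_a)$ whose combined change $1 - 2\omega_\ell$ would be negative when $\omega_\ell > 1/2$. For every remaining new pair the combined $\Delta$ equals one of $1-\omega_\ell-\omega_n$, $1-2\omega_n$, $1-\omega_\ell$, or $1-\omega_n$, each non-negative under the hypotheses $\omega_n\leq 1/2$, $\omega_\ell \leq 1$, $\omega_\ell+\omega_n\leq 1$. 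The unpaired transitions ($\KW\to\W$, $\GW\to\W$, $\U\to\KI$, $\U\to\W$) also each give $\Delta\ge 0$. Summing over all vertices yields $f_L \geq f_\Lambda$ and hence $\varphi_L \leq \varphi_\Lambda < 0$.

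The only delicate point is the case analysis on king--garden pairs; in particular, noticing that the potentially bad $(\KE_a, \GE_a)$ pairing is ruled out directly by the definition of $V_a$ (and therefore does not require Lemma~\ref{GEKE}) is exactly what makes the three weight constraints tight enough for every remaining pair.
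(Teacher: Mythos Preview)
Your proposal is correct and follows essentially the same route as the paper: reinterpret $\varphi$ as $k$ minus a sum of per-vertex weights, tabulate all admissible label transitions, and observe that the only transitions with negative contribution (namely $\KE_a\to\KE_i^L$ and $\KW\to\KE_i^L$) can be offset by pairing each such king with its garden partner under the bijection $\KE_i^L\to\GE_i^L$, yielding nonnegative pair sums under the constraints $\omega_n\le 0.5$, $\omega_\ell\le 1$, $\omega_n+\omega_\ell\le 1$.

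The one substantive difference is your final remark: you exclude the bad $(\KE_a,\GE_a)$ pair directly from the defining formula for $V_a$ (a vertex in $\KE_a$ has $N(v)\cap\GE=\emptyset$ by definition, so its partner cannot lie in $\GE\supseteq\GE_a$), whereas the paper invokes Lemma~\ref{GEKE} at this point. Your observation is correct and in fact shows that Lemma~\ref{GEKE} is immediate from the definition of $V_a$ rather than requiring the inductive argument the paper sketches. A minor gap worth making explicit is why a \emph{new} king cannot be matched to an \emph{old} garden in $\GE_i$: this follows because any $u\in\GE_i$ already has a unique neighbour in $\KE_i\subseteq\KE_i^L$, and in the complete labeling each garden has exactly one king neighbour, forcing the old $\KE_i$--$\GE_i$ matching to be preserved and hence new kings to pair with new gardens.
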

\begin{proof}\ \\
\begin{minipage}{\textwidth}
\begin{minipage}{0.33\textwidth}
\begin{tabular}{|l|l|l|}\hline
1& $\KE_a \to \KE_i^L$ & $-\omega_l$\\ \hline
2 & $\GE_a \to \GE_i^L$ & $1-\omega_l$ \\ \hline
3 & $\KW \to \KE_i^L$ & $-\omega_n$ \\ \hline
4 & $\KW \to \W^L$ & $1-\omega_n$ \\ \hline
5 & $\GW \to \GE_i^L$ & $1-\omega_n$ \\ \hline
6 & $\GW \to \W^L$ & $1-\omega_n$ \\ \hline
7 & $\U \to \W^L$ & $1$ \\ \hline
8 & $\U \to \KI^L$ & $0$ \\ \hline
9 & $\U \to \KE_i^L$ & $0$ \\ \hline
10 & $\U \to \GE_i^L$& $1$ \\ \hline
\end{tabular}
\end{minipage}
\begin{minipage}{0.67\textwidth}
We give a table for every possible label transition from $\Lambda$ to its extension $L$. Note that Algorithm $\textsc{CO-IR}$ only
computes such solutions. All entries except two cause a non-increase
 of $\varphi$. The entries number 1 and 3 expose an increase in $\varphi$. 
By the problem definition, there exists a bijection $f:\KE^L \to \GE^L$.
So for a vertex $v$ in $\KE_i^L \cap \KE_a$ we must have $f(v) \in \U \cup \GW$. By Lemma~\ref{GEKE} $f(v) \not \in \GE_a$. Taking now into account the
label transition of $f(v)$ which must be of the form  $\U \to \GE_i^L$ or $\GW \to \GE_i^L$, we see that a total decrease with respect to $v$ and
$f(v)$ of   at least $1- \omega_n -\omega_l \ge 0$ can be claimed. If $v \in \KW \cap \KE_i^L$ then by arguing analogously we get a total decrease of
at least $1- 2 \cdot \omega_n >0$.
\end{minipage}

\end{minipage}

\end{proof}


\begin{lemma}\label{lem-skipW}
Assume that all active vertices are in $\U\cup\GE\cup\KE$, with $\GE_a\cup\KE_a\neq\emptyset$ and
that there is an unlabeled vertex $v$, which has exactly two neighbors $v_G\in\GE$ and $v_K\in\KE$.
%
In the corresponding branching process, we may then omit the case $v\in\W$.
\end{lemma}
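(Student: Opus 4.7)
The plan is to prove the lemma by an exchange argument: given any complete labeling $L$ extending the current labeling $\Lambda=(\KI,\KE,\GE,\KW,\GW,\W,V_a)$ with $v \in \W^L$, I construct another complete labeling $L'$ extending $\Lambda$ with $v \in \KE_i^{L'}$ (paired to $v_G$) and $|I_{L'}|=|I_L|$. This suffices, since it shows that the $v \in \W$ branch produces no solution that cannot already be obtained from a branch that inserts $v$ into $\KE$. Note that the only garden available for $v$ as an external king is $v_G$, its unique $\GE$-neighbor, so the exchange is forced.

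The critical step is to identify the vertex to be demoted in exchange for $v$. In $L$, the external garden $v_G \in \GE^L$ has exactly one king neighbor $p_G \in \KE^L$; since $v \in \W^L$ we have $p_G \ne v$. I claim $p_G$ was unlabeled in $\Lambda$. First, by the hypothesis that all active vertices lie in $\U \cup \GE \cup \KE$, we have $\KW=\GW=\emptyset$. Second, $v_G$ must be active in $\Lambda$, i.e., $v_G \in \GE_a$: otherwise its committed king-partner in $\Lambda$ would be an inactive $\KE$-neighbor of $v_G$, and Reduction Rule~\ref{RR-D5} would have moved $v \in N(v_G)\cap \U$ into $\GW$, contradicting $v \in \U$. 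Consequently $p_G \notin \KE$ in $\Lambda$; combined with $\KW=\emptyset$, the only remaining possibility is $p_G \in \U$ in $\Lambda$.

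Now define $L'$ from $L$ by relabeling $v$ from $\W$ to $\KE$ (paired with $v_G$) and $p_G$ from $\KE$ to $\W$. The verification reduces to four straightforward checks: $L'$ extends $\Lambda$ because both $v$ and $p_G$ were in $\U$ in $\Lambda$; the irredundant set size is preserved because exactly one vertex leaves and one enters; in $L'$ the vertex $v_G$ is privately adjacent to $v$, since $v_G$'s only king neighbor in $L$ was $p_G$ and $v_K \in \KE_a$ is not adjacent to $v_G \in \GE_a$ by Lemma~\ref{GEKE}; and no other garden's king-neighborhood is disturbed, because the only neighbors of $v$ are $v_G$ and $v_K$, and $v_K$ is a king rather than a garden. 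The main obstacle is thus not the combinatorial verification that $L'$ is irredundant but the bookkeeping that $p_G$ was in $\U$ in $\Lambda$ so that the swap is legal, which is exactly where the hypothesis $\KW=\emptyset$ and Reduction Rule~\ref{RR-D5} combine to do the work.
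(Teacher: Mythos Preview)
Your proof is correct, but it takes a different exchange route from the paper's. The paper observes that only the positions of the \emph{kings} matter; gardens can be reassigned freely. Hence, starting from a complete labeling with $v\in\W$, it leaves the irredundant set $I$ untouched and simply declares $v$ to be the garden of $v_K$, sending $v_K$'s former garden to wilderness. Since $I$ does not change, there is nothing to re-verify about irredundance, and the transformed labeling trivially has the same size. This places the transformed solution in the branch $v\in\GE$.

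You instead move $v$ into $\KE$ and demote $p_G$, the current king of $v_G$, to wilderness, thereby landing in the branch $v\in\KE$. This is the ``dual'' exchange, and it works, but it costs you the extra verification that the new king set is still irredundant (every surviving king keeps its private neighbor, and $v$ obtains $v_G$ as one). Your bookkeeping that $p_G\in\U$ in $\Lambda$ via $v_G\in\GE_a$ and Reduction Rule~\ref{RR-D5} is exactly what is needed to make the swap legal; the same argument, applied symmetrically, gives $v_K\in\KE_a$, which you assert without proof when invoking Lemma~\ref{GEKE}. (Alternatively, and more directly, $p_G\in\U$ in $\Lambda$ while $v_K\in\KE$ in $\Lambda$ already forces $p_G\neq v_K$.)

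In short: both arguments are sound; the paper's is shorter because reassigning a garden never changes the irredundant set, whereas your king-swap requires checking that no other private neighbor is lost.
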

\begin{proof}
We are looking for an inclusion-maximal irredundant set.
Hence, 
only the positions of the kings matter, not  the positions of the gardens.
So, in particular we cannot insist on the garden of $v_K$ being placed on some neighbor $u$ of $v_K$ different from $v$.
In this sense,
any solution that uses $v$ as wilderness can be transformed into a no worse solution with $v\in\GE$:
Simply pair up $v$ and $v_K$, turning the hitherto garden of $v_K$ into wilderness.
So, no optimum  solution is lost by omitting   the case $v\in\W$ in the branching.
\end{proof}

\setlength{\arraycolsep}{.1111em}
\begin{table*}[bth]\tiny
$$
\begin{array}{|l||l|l|l|l|l|l|l|l||l|}
\hline
\textrm{Weight} & 1 & 0 & 1 & \omega_\ell & \omega_\ell & \omega_n & \omega_n & 0 & ---\\\hline
\textrm{Case} & \W & (\KE_i)& \GE_i & \KE_a & \GE_a & \KW & \GW  &(\U)& \textrm{potent. 
 diff.
} \geq\\
\hline
(1a)\#1     & +v  &  &        &       &       & -v  &     & & 1-\omega_n
\\
(1a)\#2     &    &    &        &  +v   &       & -v  &     & & \omega_\ell-\omega_n\\
\hline
(1b)\#1     & +v  & +x & +u       &      & -u      & -\{v,x\}  &     & & 2-2\omega_n-\omega_\ell
\\
(1b)\#2     &   & +\{v,x\} & +u       &      & -u      & -v  & +\{x,z\}    & -x,-z & 1+\omega_n-\omega_\ell
\\
\hline
(1c)\#1     & +v  &  &        &      &      & -v  &     & & 1-\omega_n
\\
(1c)\#2     &   & +v & +u       &      & -u      & -v  & +\{x_1,x_2,z\}    & -\{x_1,x_2,z\} & 1+2\omega_n-\omega_\ell
\\
\hline
(2a)\#1   &   &+v  & +v_G  &     & -v_G   & \;+N(v_G)\setminus \GE\;  &&-v&    1-\omega_\ell+2\cdot \omega_n
\\
(2a)\#2   &   &+v_K & +v  &  -v_K        &       & &+N(v_K)\setminus\KE      &-v&   1-\omega_\ell+2\cdot \omega_n
\\
\hline
(2b)\#1   &   & +\{v,v_K\} &  +\{v_G,x\}  &  -v_K  &  -v_G     & +N(v_G)\setminus\{v\}   & -x\in N(v_K)\setminus\{v\}    & -v& 2-2\omega_\ell+\omega_n
\\
(2b)\#2   &    & +\{v_K,x\}   & +\{v,v_G\}  & -v_K  & -v_G      & -x\in N(v_G)\setminus\{v\}  & +N(v_K)\setminus\{v\}     & -v & 2-2\omega_\ell+\omega_n
\\
\hline
(2c)\#1   &   & +\{v,v_K\} &  +\{v_G,x\}  &  -v_K  &  -v_G     & +N(v_G)\setminus\{v\}   & -x\in N(v_K)\setminus\{v\}    & -v& 2-2\omega_\ell+\omega_n
\\
(2c)\#2   &   &+v_K & +v  &  -v_K        &       & &+N(v_K)\setminus\KE      &-v&   1-\omega_\ell+ \omega_n
\\
\hline
(3)\#j   &   & +v &  +u  &  -v  &       & +N(u)\setminus\{v\}   & +N(v)\setminus\{u\}    & -u& 1-\omega_\ell+\deg(v)\cdot \omega_n
\\
\hline
(4a)\#1   & +N(u)  &  &  +u  &    &       &    &     & -N(u) & 2
\\
(4a)\#2   & +N(v)  &  &    &    &       &    &    & -N[v]& \deg(v)
\\
(4a)\#j   &   & +v &  +u  &    &       & +N(u)\setminus\{v\}   & +N(v)\setminus\{u\}    & -N[v]\cup N(u)& 1-\omega_\ell+\deg(v)\cdot \omega_n
\\
(4a)\#j   &   & +u &  +v  &    &       & +N(v)\setminus\{u\}   & +N(u)\setminus\{v\}    & -N[v]\cup N(u)& 1-\omega_\ell+\deg(v)\cdot \omega_n
\\
\hline
(4b)\#1   & +N(u)  &  &  +u  &    &       &    &     & -N(u) & 1
\\
(4b)\#2   & +N(v)  &  &    &    &       &    &    & -N[v]& \deg(v)
\\
(4b)\#j   &   & +v &  +u  &    &       & +N(u)\setminus\{v\}   & +N(v)\setminus\{u\}    & -N[v]\cup N(u)& 1-\omega_\ell+(\deg(v)+1)\cdot \omega_n
\\
(4b)\#j   &   & +u &  +v  &    &       & +N(v)\setminus\{u\}   & +N(u)\setminus\{v\}    & -N[v]\cup N(u)& 1-\omega_\ell+(\deg(v)+1)\cdot \omega_n
\\
\hline
\end{array}
$$
 \caption{\label{tab-branching}Overview over different branchings; symmetric branches due to exchanging roles of kings and gardens are not displayed.
Neither are possibly better branches listed.}
\end{table*}

\begin{theorem}
\label{thm:algo2-runtime}
\COIR\ can be solved in time $\Oh^*(\IRRpar^k)$.
\end{theorem}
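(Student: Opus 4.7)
The plan is to analyze Algorithm~\ref{alg:coir-MC} by the Measure \& Conquer technique using the measure $\varphi(k,\KI,\KE,\GE,\KW,\GW,\W,V_a)$. At the top-level call we start with the labeling $L_G = (\emptyset,\ldots,\emptyset,V)$, so $\varphi = k$. If I establish that the search tree has at most $\alpha^{\varphi}$ leaves for some constant $\alpha \le \IRRpar$, then since every node along a root-to-leaf path does polynomial work the total running time is $\Oh^*(\alpha^k)$. By Lemma~\ref{lem:correct-abort}, whenever $\varphi < 0$ the algorithm correctly returns \NO, so I only need to bound the branching for $\varphi \ge 0$; and by Lemma~\ref{lem-red-sound} the reduction rules never increase $\varphi$, hence the search-tree size is controlled entirely by the branches in Lines~8, 10, 14, 16.

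Next I would translate each row of Table~\ref{tab-branching} into a branching recurrence of the form $T(\varphi) \le \sum_j T(\varphi - d_j)$ where the $d_j$ are the potential drops displayed in the rightmost column. The corresponding branching number $\alpha$ is the unique positive root of $\sum_j x^{-d_j} = 1$. I would treat the cases in the natural order of the algorithm: (1a)--(1c) cover the $\KW/\GW$ branch of Line~8, exploiting Lemma~\ref{lem:notG} so that a $\KW$ vertex must have at least one additional neighbor outside $\GE\cup\KE$; (2a)--(2c) cover the degree-two unlabeled vertex handled in Line~10, where Lemma~\ref{lem-skipW} allows us to skip the wilderness branch entirely and thus obtain a strong two-way branch; (3) covers the $\KE_a \cup \GE_a$ branch of Line~14, where the matching partner is guessed among $\deg(v)$ candidates, using Lemma~\ref{lem:NGE} to argue that the other neighbors become $\KW$ or $\GW$ and thus further decrease $\varphi$ by $\omega_n$ each; and (4a), (4b) cover the main branch of Line~16, where the preference in the pivot choice (an unlabeled vertex with a degree-two neighbor whenever possible) yields one extra $\omega_n$-gain in the king/garden subcase.

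Having set up all recurrences, I would choose the weights $0 \le \omega_n \le 0.5 \le \omega_\ell \le 1$ with $\omega_n + \omega_\ell \le 1$ so as to balance the worst branching numbers simultaneously; the constraint inequalities come both from the recurrences and from Lemma~\ref{lem:correct-abort}. A numerical minimisation (as is standard in M\&C analyses) yields weights that make every recurrence satisfy $\sum_j \IRRpar^{-d_j} \le 1$, establishing $\alpha \le \IRRpar$. Polynomial overhead per node is absorbed into the $\Oh^*$ notation.

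The main obstacle will be Case~(4), since a pivot $v$ of high degree triggers up to $2\deg(v) + 2$ subcalls, one for each potential external garden or king partner $u \in N(v)$. For large $\deg(v)$ the raw recurrence is too weak, and it is essential to exploit the fact that after pairing $v$ with $u$ all remaining vertices of $N(v)\setminus\{u\}$ join $\GW$ and all of $N(u)\setminus\{v\}$ join $\KW$ (Lemma~\ref{lem:NGE} plus reduction Rule~\ref{RR-D5}), so each such subcall drops $\varphi$ by at least $1 - \omega_\ell + \deg(v)\,\omega_n$. Combined with the forced degree-two neighbor guaranteed by Line~15 (which upgrades (4a) to (4b) whenever applicable), the large number of subcalls is balanced against their large individual drops, and a careful monotonicity argument in $\deg(v)$ shows that the worst case is realised at small degree, where the recurrence is tight for $\alpha = \IRRpar$.
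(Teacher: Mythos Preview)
Your plan is exactly the paper's approach: Measure \& Conquer on $\varphi$, a case analysis following Lines~8, 10, 14 and~16 of Algorithm~\ref{alg:coir-MC}, the recurrences collected in Table~\ref{tab-branching}, and a numerical optimisation of $(\omega_\ell,\omega_n)$ (the paper takes $\omega_\ell=0.7455$, $\omega_n=0.2455$).  The lemmas you invoke are the right ones, and the monotonicity-in-degree argument for the unbounded families of recurrences in Cases~(3) and~(4) is precisely what the paper does in its appendix.

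There is, however, one concrete confusion in your treatment of Case~(4) that you should repair before the analysis closes.  You have the roles of (4a) and (4b) reversed: (4a) is the situation \emph{with} a degree-two neighbour $u$ of $v$, and the benefit of the preference in Line~15 is not an extra $\omega_n$ in the king/garden branches but an extra full unit in the \emph{wilderness} branch --- after $v\to\W$, Rules~\ref{RR-wild} and~\ref{RR-A10} cascade on $u$, so the drop there is $2$ rather than $1$.  Case~(4b) is the complementary situation (every neighbour of $v$ has degree at least~$3$), and it is in (4b) that each pairing branch gains the additional $\omega_n$, because then $|N(u)\setminus\{v\}|\ge 2$; one also needs the observation that in (4b) we may assume $\deg(v)\ge 4$, since a $3$-regular active component can occur at most once along any root-to-leaf path and can therefore be ignored.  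Finally, in Case~(4) the drop per pairing branch is $1+\deg(v)\,\omega_n$ (respectively $1+(\deg(v)+1)\,\omega_n$), not $1-\omega_\ell+\deg(v)\,\omega_n$: here $v$ starts in $\U$ and contributes nothing to $\varphi$ before the branch, so no $\omega_\ell$ is released.  With these corrections your sketch matches the paper's argument and the recurrences indeed close at $\alpha=\IRRpar$.
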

\begin{proof}
The correctness of the algorithm has been reasoned above already.
In particular, notice Lemma~\ref{lem:correct-abort} concerning the correctness of the abort.


For the running time, we now provide a partial analysis leading to
%
recurrences that estimate an upper bound on the
search tree size $T_\varphi(\mu,h)$, where $\mu$ denotes
the measure and $h$ the height of the search tree. 
 More details can be found in the appendix.
The claimed running time would then formally follow
by an induction over $h$.
\begin{enumerate}
\item
Assume that the algorithm branches on some
vertex $v\in\KW$, the case $v\in\GW$ being completely analogous.
By reduction rules, $N(v) \subseteq \U \cup \GE_a \cup
\GW$.

\begin{enumerate}
\item

If $N(v) \subseteq \U \cup \GW$, we derive the following branch in the worst case:\\
$T_\varphi(\mu,h)\leq T_\varphi(\mu-(1-\omega_n),h-1)+T_\varphi(\mu-(\omega_\ell-\omega_n),h-1).$
This follows from a simple branching analysis considering the cases that $v$ becomes wilderness or that
$v$ becomes a king.


\item
Assume now that $N(v)\cap(\GE_a)\neq \emptyset$ and
let $u\in N(v)\cap\GE_a$. Lemma~\ref{lem:notG} ensures that there
can be at most one element in $N(v)\cap\GE$. Due to Reduction
Rule~\ref{RR-D7}, $\deg(u)\geq 2$ and $\deg(v) \ge 2$ thanks to
Lemma~\ref{lem:notG}. First assume that $\deg(u)=2$, i.e.,
$N(u)=\{v,x\}$. Then, we arrive at the following recursion:
$T_\varphi(\mu,h)\leq T_\varphi(\mu-(2-\omega_\ell-\omega_n),h-1)+T_\varphi(\mu-(1-\omega_\ell+\omega_n),h-1).
$
This is seen as follows.
By setting $v\in\W$, due to Reduction Rule~\ref{RR-wild}, $u$ will
be of degree one and hence will be paired with its neighbor $x$
due to Reduction Rule~\ref{RR-D7}. If $x \in \U$, the measure decreases
by $2-\omega_l-\omega_n$. If $ x \in \KW$, it decreases
by $2-\omega_l-2\omega_n$. But then by Lemma~\ref{lem:notG} there
is $y \in N(x) \setminus \{u\}$ such that $y \in \GW \cup \U$.
Then by Reduction Rule~\ref{RR-D7} $y$ is moved to $\W \cup \KW$
giving some additional amount of at least  $\omega_n$. Note that
$y \neq v$.
If we set $v\in\KE$, then $u$ and $v$ will be paired by Reduction
Rule~\ref{RR-D5}. Thereafter, the other neighbor $x$ of $u$ will
become a member of $\KW$ or of $\W$, depending on its previous
status. Moreover, there must be  a further neighbor $z \in
\U$ of $v$ (by Lemma~\ref{lem:notG} and the fact that $u$
is the unique $\GE_a$ neighbor) that will become member of $\KW$.
This yields the claimed measure change if $z\neq x$.
 If $z=x$, then $z$ is in $\U$ and the vertex will be put into $\W$.  Thus
we get $T_\varphi(\mu - (2 - \omega_\ell - \omega_n), h-1) \le T_\varphi(\mu-(1-\omega_\ell+\omega_n),h-1)$.

\item
Secondly, assume that $\deg(u)\geq 3$ (keeping the
previous scenario otherwise). This yields the following worst-case
branch:
$T_\varphi(\mu,h)\leq T_\varphi(\mu-(1-\omega_n),h-1)+T_\varphi(\mu-(1-\omega_\ell+2\omega_n),h-1).$

 This is seen by a similar (even simpler) analysis.  Note that all $z \in N(v) \cap N(u) \subseteq \U$
get labeled $\W$ in the second branch.


\end{enumerate}

\medskip
\noindent We will henceforth not present the recurrences
for the search tree size in this explicit form, but
rather point to Table~\ref{tab-branching} that contains the same
information. There, cases are differentiated by writing B$j$ for
the $j$th branch.

\item
Assume that all active vertices are in
$\U\cup\GE\cup\KE$, with $\GE_a\cup\KE_a\neq\emptyset$. Then, the
algorithm would pair up some $v\in \GE_a\cup\KE_a$. Assume that
there is an unlabeled vertex $v$ that has exactly two neighbors
$v_G\in\GE$ and $v_K\in\KE$. Observe that we may skip the
possibility that $v\in\W$ due to Lemma~\ref{lem-skipW}. Details of
the analysis are contained in the appendix and in
Table~\ref{tab-branching}.
\begin{accumulate}
 \subsection*{Details of Theorem~\ref{thm:algo2-runtime} when all active vertices are in $\U\cup\GE\cup\KE$, with $\GE_a\cup\KE_a\neq\emptyset$.}

\begin{enumerate}
\item[(a)]
Assume 
$\deg(v_G)\geq 3$ and $\deg(v_K)\geq 3$.
Then, the 
worst-case recursion  given in Table~\ref{tab-branching}
arises.
The two branches $v\in\KE$ and $v\in\GE$ are completely symmetric:
e.g., if $v\in\KE$, then it will be paired with $v_G$ (by inactivating both of them), and Reduction Rule~\ref{RR-D5} will
 put all (at least two) neighbors of $N(v_G)\setminus\GE$ into $\KW$, and symmetrically
all neighbors of $N(v_K)\setminus\KE$ into $\GW$ in the other branch.

\item[(b)]
Assume 
that $v_G$ and $v_K$ satisfy $\deg(v_G)=2$ and $\deg(v_K)=2$.
Then, the 
recursion  given in Table~\ref{tab-branching}
arises.
Assume first that  $v_G$ and $v_K$ do not share another neighbor.
Then, when $v$ is put into $\KE$, then $v$ is paired up with $v_G$. Since the degree of $v_K$ will then
drop to one by Reduction Rule~\ref{RR-D6}, $v_K$ must have its garden on the only remaining neighbor.
This will be achieved with Reduction Rule~\ref{RR-D7}. Therefore, all in all the measure decreases by
$2\cdot (1-\omega_\ell)$; moreover, we turn at least one neighbor of $v_G$ into a $\GW$-vertex.

Secondly, it could be that $v_G$ and $v_K$ share one more neighbor, i.e., $N(v_G)=N(v_K)=\{q,v\}$.
If $q$ has any further neighbor $x$, then $x\in\U$ (confer Reduction Rule~\ref{RR-D8}), or these vertices form a small component that is
handled, since it has maximum degree two. The reasoning for the measure having guaranteed the existence of $x\in\U$ is similar to the first case.


\item[(c)]
We now assume that $N(v)=\{v_K,v_G\}$, $\deg(v_G)=2$, and $\deg(v_K)\geq 3$.
Then, the 
worst-case recursion  given in Table~\ref{tab-branching}
arises.
This can be seen by combining the arguments given for the preceding two cases. \\



\end{enumerate}

\end{accumulate}

\item
Assume that all active vertices are in
$\U\cup\GE\cup\KE$, with $\GE_a\cup\KE_a\neq\emptyset$. Then, the
algorithm tries to pair up some $v\in \GE_a\cup\KE_a$ of maximum
degree. 
 There are $\deg(v)$ branches for the
cases labeled $(3)\#j$.
%
 Since the two possibilities arising from $v\in \GE_a\cup\KE_a$ are completely symmetric, we focus on $v\in\GE_a$.
Exactly one neighbor $u$ of $v\in \GE_a$ will be paired with $v$ in each step, i.e., we set $u\in\KE$.
Pairing the king on $u$ with the garden from $v$ will inactivate both $u$ and $v$.
Then, reduction rules will label all other neighbors of $v$ with $\GW$ (they can
no longer be kings), and symmetrically all other neighbors of $u$ with $\KW$.
Note that $N(u)\setminus(\KE \cup \{v\})\neq\emptyset$, since otherwise a previous branching case or Reduction Rules~\ref{RR-D8} or~\ref{RR-A10} would have triggered.
Thus, there must be some $q\in N(u)\cap \U$. From $q$, we obtain at least a measure decrease of $\omega_n$, even if
$q\in N(v)$.
This results in 
a set of recursions depending on 
the degree of $v$
%
as given in Table~\ref{tab-branching}.

\item Finally, assume $V_a = \U$.
Since an instance consisting of paths and cycles 
can be easily seen to be optimally solvable in polynomial time,
we can assume that we can always find
a vertex $v$ of degree at least three to branch at.
Details of the analysis are contained in
the appendix
and in Table~\ref{tab-branching}. There are $\deg(v)$
branches for each of the cases $(4x)\#j$, where $x\in\{a,b\}$.
\begin{accumulate}
\subsection*{Details when  all active vertices are in $\U=V_a$.}

There are two cases to be considered:
(a) either  $v$ has a neighbor $u$ of degree two or (b) this is not the case.

The analysis of (a) yields the
recursion  given in Table~\ref{tab-branching}.
The first term can be explained by considering the case  $v\in\W$.
In that case, Reduction Rules~\ref{RR-wild} and~\ref{RR-A10} trigger and yield the required measure change.
Notice that all vertices have minimum degree of two at this stage due to Reduction Rule~\ref{RR-D7}.
In the case where $v\in\KI$, the neighbors of $v$ are added to $\W$, yielding the second term.



The last two terms are explained as follows:
We simply consider all possibilities of putting $v\in \KE\cup\GE$ and looking
for its partner in the neighborhood of $v$. Once paired up with some $u\in N(v)$, the other neighbors of $\{u,v\}$ will be placed
into $\GW$ or $\KW$, respectively.




In case (b), we obtain
the
recursion  given in Table~\ref{tab-branching}.
%

This is seen by a slightly simplified but similar argument to what is written above.
Notice that we can assume that $\deg(v)\geq 4$, since the case when $\deg(v)=3$ (excluding degree-1 and degree-2 vertices
in $N(v)$ that are handled either by Reduction Rule~\ref{RR-A10} or by the previous case) will imply that the
graph $G[V_a]$ is 3-regular due to our preference to branching on high-degree vertices.
However, this can happen at most once in each path of the recursion, so that we can neglect it.
\end{accumulate}


\end{enumerate}

Finally, to show the claimed running time, we set $\omega_\ell=0.7455 $ and
$\omega_n=0.2455$ in the recurrences. If the measure drops below
zero, then we argue that we can safely answer \NO, as shown in
Lemma~\ref{lem:correct-abort}.
\end{proof}

\begin{corollary}
$\graphIR(G)$ can be computed in time $\Oh^*(\IRRexa^n)$.
\end{corollary}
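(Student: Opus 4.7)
The plan mirrors Corollary~\ref{cor:exact-runningtime1}, but uses the faster algorithm of Theorem~\ref{thm:algo2-runtime} instead of the $\Oh^*(\thealpha^k)$ algorithm. Fix a constant $\beta\in(1/2,1)$ to be tuned below, and let $k^\star:=n-\graphIR(G)$. The goal is to determine $k^\star$, or equivalently to find a maximum irredundant set.

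First I would invoke the $\COIR$ algorithm of Theorem~\ref{thm:algo2-runtime} on inputs $(G,k)$ for $k=0,1,\dots,\lfloor\beta n\rfloor$, stopping at the first $\YES$. If some call returns $\YES$, then $k^\star\le\lfloor\beta n\rfloor$ and the total running time is $\Oh^*(\IRRpar^{\beta n})$, because the final call dominates the geometric sum $\sum_{k=0}^{k^\star}\IRRpar^k$. Otherwise every call returns $\NO$, so $\graphIR(G)<n-\lfloor\beta n\rfloor$, meaning every maximum irredundant set has at most $\lceil(1-\beta)n\rceil$ vertices. I would then brute-force enumerate every subset $S\subseteq V$ of cardinality at most $\lceil(1-\beta)n\rceil$, test each in polynomial time for irredundance, and output the largest witness. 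Because $\beta>1/2$, we have $(1-\beta)n<n/2$, so the number of subsets inspected is $\Oh^*\bigl(\binom{n}{\lceil(1-\beta)n\rceil}\bigr)=\Oh^*(2^{H(1-\beta)n})$, where $H$ denotes the binary entropy.

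The only remaining step is to pick $\beta$ so that the two phases are balanced, i.e., so that $\beta\log_2\IRRpar=H(1-\beta)$. A routine numerical calculation yields $\beta\approx 0.5692$, at which point both $\IRRpar^{\beta n}$ and $2^{H(1-\beta)n}$ are bounded by $\IRRexa^n$, giving the claim. I do not foresee any real obstacle here: correctness is inherited from Theorem~\ref{thm:algo2-runtime} together with the polynomial-time checkability of irredundance, and the argument is a straightforward ``win-win'' packaging of Theorem~\ref{thm:algo2-runtime}.
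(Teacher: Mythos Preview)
Your approach is exactly the paper's ``win-win'' argument: run the $\Oh^*(\IRRpar^{k})$ algorithm of Theorem~\ref{thm:algo2-runtime} for all $k$ up to a threshold $\beta n$, and otherwise brute-force all subsets of size at most $(1-\beta)n$.

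The only issue is the numerical balancing. With $\beta\approx 0.5692$ the two phases are \emph{not} balanced: $\IRRpar^{0.5692n}\approx 1.893^n$, whereas $\binom{n}{0.4308n}\approx 2^{H(0.4308)n}\approx 1.981^n>\IRRexa^n$, so the stated bound would not follow. Solving $\beta\log_2\IRRpar=H(1-\beta)$ with $\log_2\IRRpar\approx 1.618$ actually gives $\beta\approx 0.6$, i.e., the enumeration phase handles subsets of size up to roughly $0.4n$ (precisely what the paper states); at that point both phases are $\Oh^*(2^{0.971n})=\Oh^*(\IRRexa^n)$. Aside from this arithmetic slip, your argument is correct and coincides with the paper's.
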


This can be seen by the balancing ``win-win'' approach described above, exhaustively
testing irredundant candidate sets up to size $\approx 0.4 \cdot n$.
\section{Conclusions}

We presented a parameterized route to the solution of yet unsolved questions
in exact algorithms. More specifically, we obtained algorithms for computing
the irredundance numbers running in time less than $\Oh^*(2^n)$ by devising appropriate
parameterized algorithms (where the parameterization is via a bound $k$ on the co-irredundant set)
running in time less than $\Oh^*(4^k)$. 

The natural question arises if one can avoid this detour to parameterized algorithmics to
solve such a puzzle from exact exponential-time algorithmics. 
A possible non-parameterized attack on the problem is to adapt the measure $\varphi$.
Doing this in a straightforward manner, we arrive at a measure $\tilde
\varphi:$
{\small
$$
\tilde \varphi(n, \KI,\KE, \GE, \W,\KW,\GW,V_a) =  n - |\W|-|\GE_i| -|\KE_i|-\tilde\omega_\ell(|\KE_a| +|\GE_a|)-\tilde\omega_n(|\KW|+|\GW|)
$$
}
It is quite interesting that by adjusting the recurrences with respect to $\tilde \varphi$ a run time less then $\Oh^*(2^n)$ was not possible to achieve.
For example, the recurrences under $1a)$  and $1c)$ translate to $T_{\tilde\varphi}(\mu,h)\leq
T_{\tilde\varphi}(\mu-(1-\tilde \omega_n),h-1)+T_{\tilde\varphi}(\mu-(\tilde \omega_\ell-\tilde \omega_n),h-1)$ and $T_{\tilde\varphi}(\mu,h)\leq
T_{\tilde\varphi}(\mu-(1-\tilde \omega_n),h-1)+T_{\tilde\varphi}(\mu-(2-\tilde \omega_\ell-2 \tilde \omega_n),h-1)$.
Now optimizing over $\tilde \omega_\ell$, $\tilde \omega_n$ and the maximum over the two branching numbers alone we already arrive at a run time bound of
$\Oh^*(2.036^n)$ (whereas $\tilde\omega_\ell=1.13$ and $\tilde\omega_n=0.08$).
Thus, the parameterized approach was crucial for obtaining a run time upper bound better  than the trivial enumeration barrier $\Oh^*(2^n)$.
Observe that for these particular problems, allowing a weight of $\tilde\omega_\ell\in [0,2]$ is valid, while usually only weights
in $[0,1]$ should be considered.

It would be interesting to see this approach used for other problems, as well.
Some of the vertex partitioning parameters discussed in~\cite{Tel94a} seem to be appropriate.

We believe that the M\&C approach could also be useful to find better algorithms for computing the lower irredundance number.
We are currently working on the details which should yield in running times similar to the computations for the upper irredundance number.

More broadly speaking, we think that an extended exchange of ideas between the field of Exact Exponential-Time Algorithms, in particular the M\&C approach, and that of
Parameterized Algorithms, could be beneficial for both areas. In our case, we would not have found the good parameterized search tree algorithms if we had not been
used to the M\&C approach, and conversely only via this route and the corresponding way of thinking we could break the $2^n$-barrier for computing
irredundance numbers.



A few days before, we became aware that a group consisting of M. Cygan, M. Pilipczuk and  J. O. Wojtaszczyk from Warsaw 
independently found  $c^{n}$ algorithms, $c<2$, for computing the irredundance numbers.
Unfortunately, we do not know any details.

\bibliographystyle{plain}
\bibliography{abbrev,hen,irred}

\clearpage
\section{Appendix}

\dotheappendixmagic

\subsection*{Further Discussion of the Branching Cases}
\label{BranchDissc3}
We shall now show that out of the infinite number of recurrences that we derived for \COIR, actually
only finitely many need to be considered. 

\begin{description}
\item[Case $(3)\#j$] For this case we derived a recurrence of the following g form where $i:=\deg(v)$.
\begin{eqnarray*}
T_\varphi(\mu,h)&\leq& i\cdot T_\varphi(\mu-((1-\omega_\ell)+(i+1)\cdot\omega_n,h-1)\\
&\leq&
i\cdot \IRRpar^{k-((1-\omega_\ell)+(i+1)\cdot\omega_n)}\\
&=&\IRRpar^k\cdot f(i)
\end{eqnarray*}
The first inequality should follow by induction on the height $h$ of the search tree, while this entails
$
T_\varphi(\mu,h)\leq \IRRpar^k$ if $f(i)\leq 1$ for all $i$.
We now discuss $f(i)=i\cdot \IRRpar^{-((1-\omega_\ell)+(i+1)\cdot\omega_n)}$.

We had a closer look at the behavior of this function $f(i)$.
Its derivative with respect to $i$ is:
$$(3.069)^{(-0.5199-i \cdot 0.2455)}-0.2455 \cdot i \cdot (3.069)^{(-0.5199-i \cdot 0.2455)log(3.069)}$$

The zero of this expression is at $$
z= 2000/491/ \log(3.069)\approx 3.625$$
 We validated that this is indeed a saddle point of $f$, and $f$ is strictly decreasing from there on, yielding a value $f(z)<0.8$.
Hence, it is enough to look into all recursions up to $i=4$ in this case as also $f(3)<0.8$.\\

\item[Case$4a\#j$] 
We further have to look into:
$$T_\varphi(\mu,h)\leq T_\varphi(\mu-i,h-1)+2\cdot i\cdot T_\varphi(\mu-(1+i\cdot\omega_n),h-1)+T_\varphi(\mu-2,h-1)$$
Hence, we have to discuss for $i\geq 3$ $$f(i)=\IRRpar^{-i}+2\cdot i\cdot\IRRpar^{-(1+i\cdot\omega_n)}+\IRRpar^{-2}.$$

We find that $f(3)<1$ and $f(4)<1$ and that a saddle-point of $f(i)$ is between 3 and 4 (namely at $3.2$)and $f$ is
strictly decreasing from $4$  on. So, 
from $i=4$ on, all values are strictly below one.\\
\item[Case$4b\#j$]
Finally, we investigate

$$T_\varphi(\mu,h)\leq T_\varphi(\mu-i,h-1)+2\cdot i\cdot T_\varphi(\mu-(1+(i+1)\cdot\omega_n),h-1)+T_\varphi(\mu-1,h-1)$$
So, investigate for $i\geq 4$ the function
$$f(i)=\IRRpar^{-i}+2\cdot i\cdot \IRRpar^{-(1+(i+1)\cdot\omega_n)}+\IRRpar^{-1}.$$
Again, we found that the saddle-point of $f$ is between 3 and 4 and that $f$ is strictly decreasing from $4$ on, with $f(4)<0.998$.

\end{description}

\end{document}